\documentclass[12pt]{article}
\usepackage{geometry}
\geometry{left=2.5cm, right=2.5cm, top=2.5cm, bottom=2.5cm}

\usepackage{amsthm, amsmath, amsmath, amssymb, bbm, bm}
\usepackage[round]{natbib}
\usepackage[colorlinks, 
linkcolor=blue,
anchorcolor=blue,
urlcolor=blue,
citecolor=blue]{hyperref}
\usepackage{graphicx}
\usepackage{mathrsfs}
\usepackage{threeparttable}
\usepackage[title]{appendix}
\usepackage{url}
\urlstyle{same}

\usepackage{diagbox}
\usepackage{booktabs}
\usepackage{caption}
\usepackage{subfigure}
\usepackage{float}
\usepackage{listings}
\usepackage{multirow}
\usepackage{rotating}
\usepackage{longtable}
\usepackage{enumitem}
\usepackage{cases}
\usepackage{algorithm}
\usepackage{algorithmic}
\usepackage{ulem}
\usepackage{authblk}

\usepackage{xr}
\externaldocument{supp}



\newtheorem{theorem}{Theorem}[section]

\newtheorem{proposition}{Proposition}[section]

\newtheorem{lemma}{Lemma}[section]
\newtheorem{definition}{Definition}[section]

\newtheorem{assumption}{Assumption}

\newtheoremstyle{mycase}{5pt}{5pt}{\upshape}{}{\bfseries}{.}{ }{}\theoremstyle{mycase}

\newtheorem{remark}{Remark}
\usepackage{xcolor}  
\newtheoremstyle{myIV}{1.5pt}{1.5pt}{\upshape}{1em}{\bfseries}{.}{ }{}\theoremstyle{myIV}

\newtheoremstyle{mysets}{1.5pt}{1.5pt}{\upshape}{}{\bfseries}{.}{ }{} \theoremstyle{mysets}

\renewcommand{\theequation}{\thesection.\arabic{equation}}
\numberwithin{equation}{section}

\renewcommand{\hat}{\widehat}
\renewcommand{\tilde}{\widetilde}

\newcommand{\Cov}{\mathrm{Cov}}
\newcommand{\Var}{\mathrm{Var}}
\newcommand{\bX}{\boldsymbol{X}}

\newcommand{\bbeta}{\bm{\beta}}

\newcommand{\tr}{\mathrm{tr}}
\def\bfred#1{{\color{red}\bf#1}}

\def\tangcomment#1{\vskip 2mm\boxit{\vskip 2mm{\color{red}\bf#1} {\color{blue}\bf --
Yanlin\vskip 2mm}}\vskip 2mm}
\def\shicomment#1{\vskip 2mm\boxit{\vskip 2mm{\color{blue}\bf#1} {\color{blue}\bf --
Yanmei\vskip 2mm}}\vskip 2mm}

\newcommand{\hao}{\textcolor{magenta}}

\begin{document}
	\allowdisplaybreaks[3] 
	
	\title{\bf Estimation and inference of high-dimensional partially
linear  regression models with latent factors }
	
	\author{Yanmei Shi$^{1}$, Meiling Hao$^{2}$, Yanlin Tang $^{3*}$ and Xu Guo$^1$\footnote{Corresponding authors: Xu Guo and Yanlin Tang; email-address: xustat12@bnu.edu.cn and yltang@fem.ecnu.edu.cn.}
		\\~
		\\	{\small \it $^{1}$ School of Statistics, Beijing Normal University, Beijing, China}\\
		{\small \it $^{2}$ School of Statistics, University of International Business and Economics, Beijing, China}\\
		
		{\small \it $^{3}$ Key Laboratory of Advanced Theory and
			Application in Statistics and Data
			Science–MOE, School of Statistics, East
			China Normal University, Shanghai,
			China}\\
	}
	
	\renewcommand\Authands{, and }

	\date{}
	\maketitle
	
	\vspace{-0.5in}
	
	\begin{abstract}

In this paper, we introduce a novel high-dimensional Factor-Adjusted sparse Partially Linear regression Model (FAPLM), to integrate the linear effects of high-dimensional latent factors with the nonparametric effects of low-dimensional covariates. The proposed FAPLM  combines the interpretability of linear models, the flexibility of nonparametric models, with the ability to effectively capture the dependencies among high-dimensional covariates. 
{We develop a penalized estimation approach for the model by leveraging B-spline approximations and factor analysis techniques. Theoretical results establish error bounds for the estimators, aligning with the minimax rates of standard Lasso problems.}
 To assess the significance of the linear component, we  introduce a factor-adjusted projection debiased procedure and employ the Gaussian multiplier bootstrap method to  derive critical values. Theoretical guarantees are provided under regularity conditions.
 Comprehensive numerical experiments validate the finite-sample performance of the proposed method. Its successful application to a birth weight dataset, the motivating example for this study, highlights both its effectiveness and practical relevance.

		\medskip
		
		\noindent {\it Keywords:}  B-spline; Factor-adjusted partially linear regression model; Latent factor; Nonparametric model;   Projection debiased procedure
	\end{abstract}

	\section{Introduction} \label{Introduction}

In the era of information explosion,  massive datasets have been accumulated across a wide range of scientific disciplines, including genomics, social sciences, and econometrics \citep{peng2010regularized, belloni2012sparse, fan2014challenges, buhlmann2014high}.
In this context, we  have access to a birth weight dataset, which   consists of  24,539 gene expression profiles from peripheral blood samples of 72 pregnant women, along with clinical data such as maternal age, Body Mass Index (BMI), gestational age, parity, blood cotinine levels, and infant birth weights. 
The study aims to explore the relationship between maternal gene expression and infant birth weight. Preliminary analysis suggests a potential nonlinear association between maternal age and birth weight \citep{sherwood2016partially}. Additionally, the gene expression profiles exhibit significant correlations, likely stemming from shared biological pathways, co-regulatory mechanisms, and technical factors associated with experimental and preprocessing procedures.
Effectively analyzing this dataset requires addressing several key challenges: (1) capturing partially linear relationships between the response variable and covariates, (2) handling high-dimensional data, and (3) accurately modeling dependencies among features. Existing models, however, fall short of fully addressing these requirements. This underscores the pressing need for a novel methodological framework that meets these objectives while providing interpretable and actionable insights. Developing such a model remains both a critical and challenging task.

In this paper, we propose a novel  high-dimensional Factor-Adjusted sparse Partially Linear regression Model (FAPLM), 
\begin{align} Y&=\boldsymbol{u}^{\top}\boldsymbol{\beta}_0+\boldsymbol{f}^{\top}\boldsymbol{\varphi}_0+g_{0}(\boldsymbol{z})+\varepsilon, \label{FAPLM} \\
		\text{with} \ \ \boldsymbol{x}&=\boldsymbol{B}\boldsymbol{f}+\boldsymbol{u}, \label{factor model}
	\end{align}
where  $Y$  is a response variable, $\boldsymbol{x}$ and $\boldsymbol{z}$ are  $p$-dimensional and   $d$-dimensional  predictors, 
 $\boldsymbol{f}$ is a $K$-dimensional vector of latent factors, $\boldsymbol{B}\in \mathbb{R}^{p\times K}$ is the corresponding factor loading matrix, and $\boldsymbol{u}$ is a $p$-dimensional vector of idiosyncratic component which is uncorrelated with $\boldsymbol{f}$. All the three parts $\boldsymbol{B}$, $\boldsymbol{f}$ and $\boldsymbol{u}$ are unobserved. 
The parameter $\boldsymbol{\beta}_0\in \mathbb{R}^{p}$ is a vector of regression parameters, the regression coefficient on $\boldsymbol{f}$ is $\boldsymbol{\varphi}_0=\boldsymbol{B}^{\top}\boldsymbol{\beta}_0$, and   $g_{0}(\cdot)$  is an  unknown link function.
The random error  $\varepsilon$  satisfies that $\mathbb{E}\left(\varepsilon\right)=0$, $\mathbb{\Var}(\varepsilon)={\sigma}_{\varepsilon}^2$, and it's independent of $\boldsymbol{u}$, $\boldsymbol{f}$ and $\boldsymbol{z}$.
 In addition,  we assume that $\boldsymbol{f}$ and $\boldsymbol{z}$ are uncorrelated.
  The proposed model is compelling because it melds the chief advantages of linear regression
 models,  nonparametric  models and factor models. Specifically,  Model \eqref{FAPLM}  integrates a linear structure for  $\boldsymbol{x}$  with a nonparametric structure for $\boldsymbol{z}$, which strikes a balance between the interpretability of linear regression models and the flexibility of nonparametric models. 
{This paper focuses on scenarios where $p$ is high-dimensional, including cases where it diverges with $n$ and $p \gg n$ (where $n$ represents the sample size), while $d$ remains fixed. Without loss of generality, we assume $d = 1$.  }
 In such cases, the components of the observed $p$-dimensional covariate vector $\boldsymbol{x}$ are correlated through a set of shared latent factors, effectively addressing the significant dependencies among features in high-dimensional data.  In summary, the FAPLM incorporates all the
 aforementioned desired features.

The proposed model bridges two important frameworks: factor models and partially linear regression models (PLMs).
The factor model \eqref{factor model} is a widely used tool for capturing dependencies among features \citep{Bai2003, fan2013large}, with extensive research dedicated to its estimation and statistical inference; see e.g., \cite{kneip2011factor, fan2013large,  fan2020factor, fan2021augmented,guo2022doubly,beyhum2023tuning,du2023simultaneous,ouyang2023high,sun2023decorrelating,bing2024inference,fanJ2023}.
 However, these studies often rely on specific model assumptions and face challenges in high-dimensional settings or when dealing with semiparametric structures like PLMs.
Similarly, extensive research has been conducted on PLMs, particularly in high-dimensional settings. Various methods have been developed to address the parameter estimation problem, as exemplified in studies such as \cite{xie2009scad, zhu2017nonasymptotic, lian2019projected}. Furthermore, significant progress has been made in statistical inference for PLMs, with notable contributions from works such as \cite{zhu2019high, wang2017generalized, wang2020test, liu2020tests, zhao2023new}.
 Nevertheless,  
 {many of these methods face challenges in consistently recovering the true model or accurately testing regression coefficients when significant correlations exist among the covariates, thereby limiting their effectiveness in such scenarios.}

Hence, estimating and inferring the unknown parameters in the FAPLM pose significant challenges, particularly when high-dimensional variables exhibit significant dependence. Addressing these challenges requires the development of novel methodologies. The primary contributions of this work are summarized as follows.
 \begin{itemize}
 \item[$\bullet$] 
We propose a general framework for high-dimensional PLMs with latent factors. The primary challenges in Model \eqref{FAPLM} arise from the unknown smooth function $g_0(\cdot)$ and the unobserved factors  $\boldsymbol{f}$ and $\boldsymbol{u}$.  
To address these issues, we employ the B-spline method to approximate the nonparametric component of the FAPLM \citep{huang2007efficient, liang2009variable, xie2009scad}, and use principal component analysis to estimate the latent factors. It is important to note that  estimating $\boldsymbol{f}$ and $\boldsymbol{u}$ introduces additional variability, bringing
further complexity to the problem.
This approach reformulates the original model into a factor-adjusted linear structure,  facilitating more efficient estimation and inference.
 \item[$\bullet$] 
We establish an upper bound for the error of the linear estimator, aligning with the minimax rate for the standard Lasso problem \citep{raskutti2011minimax}. In the context of high-dimensional FAPLM, we demonstrate that the approximation error of the nonlinear component and the estimation error of the latent factors have a negligible impact on the overall error of the linear component. These findings provide valuable insights into the role of latent factors in mediating the interaction between linear and nonlinear components.
 \item[$\bullet$] 
We develop a test procedure based on the  projection strategy to evaluate the significance of the linear component. This task is complicated by the inherent bias of penalized estimators, as well as the challenges posed by the unknown nonparametric component $g_0(\cdot)$ and latent factors, which hinder achieving a zero-centered limiting distribution. To overcome these difficulties, we propose a factor-adjusted projection debiased test, where the test statistic is approximated by the maximum of a high-dimensional zero-mean Gaussian vector. Critical values are derived using the Gaussian multiplier bootstrap method. Theoretical validity of the proposed procedure is established under regularity conditions, and its power properties are comprehensively analyzed.
 \end{itemize}

	The rest of the paper is organized as follows. In Section \ref{FAPLM section}, we  approximate the nonparametric function with polynomial splines, and thoroughly investigate the estimation of latent factors.
	In addition, we consider the regularization estimation of $\bbeta_0$. 
	In Section \ref{Statistical analysis}, we develop a factor-adjusted projection debiased     procedure  for testing whether $\bbeta_0=\boldsymbol{0}$. 
    Section \ref{Theoretical Results} derives the
theoretical properties of the regularized  estimator and factor-adjusted projection debiased estimator.    
    We present the findings of our simulation studies in Section \ref{Numerical studies section} and provide an analysis of real data in Section \ref{Real data analysis section} to assess the performance and effectiveness of the proposed approach. Conclusions and discussions are presented in Section \ref{Conclusions and discussions section}. Proofs of the main Theorems and  related technical Lemmas are attached in the  Supplementary Material.

	\textbf{Notations.} Let $\mathbb{I}(\cdot)$  denote the indicator function. For a vector $\boldsymbol{a}=\left(a_{1}, \ldots, a_{m}\right)^{\top}\in \mathbb{R}^{m}$, we denote its $\ell_{q}$ norm as $\|\boldsymbol{a}\|_{q}=\left(\sum_{\ell=1}^{m}|a_{\ell}|^{q}\right)^{1/q}, \ 1\leq q< \infty$, $\|\boldsymbol{a}\|_{\infty}=\max_{1\leq \ell \leq m}|a_{\ell}|$,  and $\|\boldsymbol{a}\|_{0}=\sum_{\ell=1}^{m}\mathbb{I}(a_{\ell}\neq 0)$.
	For any integer $m$, we define $[m]=\{1,\ldots, m\}$.
	The Orlicz norm  of a scalar random variable $X$ is defined as $\|X\|_{\psi_{2}}=\inf{\{c>0: \mathbb{E}\exp(X^{2}/c^{2})\leq 2\}}$.
	For a random vector $\boldsymbol{x}\in \mathbb{R}^{m}$, we define its  Orlicz norm  as $\|\boldsymbol{x}\|_{\psi_{2}}=\sup_{\|\boldsymbol{c}\|_{2}=1}\|\boldsymbol{c}^{\top}\boldsymbol{x}\|_{\psi_{2}}$.
	Furthermore, we use $\boldsymbol{I}_{K}$,  $\boldsymbol{1}_{K}$ and $\boldsymbol{0}_{K}$ to denote the identity matrix in $\mathbb{R}^{K \times K}$, a vector of dimensional $K$ with all elements being $1$ and all elements being $0$, respectively.
	For a matrix $\boldsymbol{A}={(A_{jk})}$, 
	{ $\boldsymbol{a}_{i}$ represents {the} $i$-th row of $\boldsymbol{A}$, and $\boldsymbol{A}_{j}$ represents {the} $j$-th column of $\boldsymbol{A}$.}
    We define $\|\boldsymbol{A}\|_{\mathbb{F}}=\sqrt{\sum_{jk}{A}_{jk}^{2}}$, $\|\boldsymbol{A}\|_{\max}=\max_{j,k}|A_{jk}|$, $\|\boldsymbol{A}\|_{\infty}=\max_{j}\sum_{k}|A_{jk}|$ and $\|\boldsymbol{A}\|_{1}=\max_{k}\sum_{j}|A_{jk}|$
	to be its Frobenius norm, element-wise max-norm, matrix $\ell_{\infty}$-norm and matrix $\ell_{1}$-norm, respectively.
	Besides, we use $\lambda_{\min}(\boldsymbol{A})$ and $\lambda_{\max}(\boldsymbol{A})$  to denote the minimal and maximal eigenvalues of $\boldsymbol{A}$, respectively.
	We use $|\mathcal{A}|$ to denote the cardinality of a set  $\mathcal{A}$.
	For two positive sequences $\{a_{n}\}_{n \geq 1}$, $\{b_{n}\}_{n \geq 1}$, we write $a_{n}=O(b_{n})$ if there exists a positive constant $C$ such that $a_{n}\leq C\cdot b_{n}$, and we write $a_{n}=o(b_{n})$ if $a_{n}/b_{n}\rightarrow 0$. 
	Furthermore, if $a_{n}=O(b_{n})$ is satisfied, we write $a_n\lesssim b_n$. If
	$a_n\lesssim b_n$ and $b_n\lesssim a_n$, we  write it as $a_n\asymp b_n$ for short.
	In addition, let $a_{n}=O_{\mathbb{P}}(b_{n})$ denote $\Pr(|a_n/b_n| \leq c)\rightarrow 1$ for some constant $c < \infty$. Let $a_{n}=o_{\mathbb{P}}(b_{n})$ denote $\Pr(|a_n/b_n| > c)\rightarrow 0$ for any constant $c >0$.
 The parameters $c, c_{0},  C, C_{1}, C_{2}$ and $K'$  appearing in
	this paper are all positive constants. 

	\section{Methodology} \label{FAPLM section}		
	In this section, we investigate the reformulation of FAPLM and the consistent estimation of latent factors. Based on the transformed model and the estimated factors, we then propose the regularized estimators for the model parameters to ensure accurate inference.
    
	\subsection{Reformulation of FAPLM}
To address the unknown nonparametric component in model \eqref{FAPLM}, several methods have been proposed, including B-splines \citep{sherwood2016partially}, Reproducing Kernel Hilbert Space (RKHS) techniques \citep{wang2022sparse}, and neural networks \citep{zhong2024neural}. In this paper, we adopt B-spline method to approximate the nonparametric component $g_{0}(\cdot)$ in model \eqref{FAPLM}. 
To proceed, we first introduce the relevant definition.

	\begin{definition} (H$\ddot{\text{o}}$lder Space).
    Let  $\mathcal{H}_{r}$ denote the H$\ddot{\text{o}}$lder space of order $r$.
		Define  $\mathcal{H}_{r}$ as the collection of all functions $g: [0,1] \rightarrow \mathbb{R}$, whose $m$-th  order derivative satisfies the H$\ddot{\text{o}}$lder condition of order $\nu$ with $r:= m+\nu$, where $m$ is a positive integer and  $\nu \in (0,1]$.  Specifically, for any $g \in \mathcal{H}_{r}$, there exists a constant $c>0$ such that for any $0 \leq s,t \leq 1$, 
        $\left|g^{(m)}(s)-g^{(m)}(t)\right|\leq c|s-t|^{\nu}$.
	\end{definition}
    
	The order $r$ reflects the smoothness of functions within the  H$\ddot{\text{o}}$lder space and determines the approximation accuracy of the B-spline method for $g_0(\cdot)$.
	Let $\boldsymbol{\pi}_i=\boldsymbol{\Pi}( Z_{i})=\{\Pi_{1}(Z_{i}), \ldots,\Pi_{k_{n}+h+1}(Z_{i})\}^{\top}, \ i=1,\ldots,n$,
	be a set of B-spline basis functions of order $h+1$ with $k_{n}$ quasi-uniform internal knots on $[0, 1]$. 
	We assume that the B-spline basis is non-negative and  normalized to have $\sum_{s=1}^{M_{n}}\Pi_{s}(Z)=\sqrt{M_{n}}$, with $M_{n}=k_{n}+h+1$.
	Then if $g_{0} \in \mathcal{H}_{r}$ for some $r \geq 1.5$, 
	$g_{0}$ can be well approximated by a linear combination of normalized B-spline basis functions. That is, there exists a $M_{n}$-vector $\boldsymbol{\xi}_{0}$ such that 
	\begin{align} \label{unparameter error}
		\sup_{Z \in [0,1]}\left|\boldsymbol{\Pi}( Z)^{\top}\boldsymbol{\xi}_{0}-g_{0}(Z)\right|< CM_{n}^{-r}.
	\end{align} 
	Here $C$ denotes a positive constant that depends solely on the orders of the H$\ddot{\text{o}}$lder space and the spline function space.  This result follows from the approximation properties of splines \citep{huang2003local, schumaker2007spline}.
	With the B-spline basis functions, model \eqref{FAPLM} can be approximated by
	\begin{align} Y&\approx \boldsymbol{f}^{\top}\boldsymbol{\varphi}_0+\boldsymbol{u}^{\top}\boldsymbol{\beta}_0+\boldsymbol{\Pi}( Z)^{\top}\boldsymbol{\xi}_0+\varepsilon, \label{FAPLM_basis_approximation}
		 \\
        \text{with} \ \ \boldsymbol{x}&=\boldsymbol{B}\boldsymbol{f}+\boldsymbol{u}, \notag
	\end{align}
	where $\boldsymbol{\varphi}_0=\boldsymbol{B}^{\top}\boldsymbol{\beta}_0$. 
This demonstrates that FAPLM has been reformulated into a linear form via the B-spline method. 
 Consequently, we can apply linear model-based estimation and inference methods to analyze the parameters in model \eqref{FAPLM_basis_approximation}. However, since the latent factors $\boldsymbol{f}$ and 
$\boldsymbol{u}$ are unobserved, they must first be estimated. In the following subsection, we investigate the consistent estimation of these latent factors.

	\subsection{Factor estimation}
	Throughout this paper, we assume that the data $\{\boldsymbol{x}_{i}, Z_{i}, \boldsymbol{f}_{i}, \boldsymbol{u}_{i}, Y_{i}, \varepsilon_{i}\}_{i=1}^{n}$ are  independent and identically distributed (i.i.d.) copies of $\{\boldsymbol{x}, Z,  \boldsymbol{f}, \boldsymbol{u}, Y, \varepsilon\}$. {Let  $\boldsymbol{X}=(\boldsymbol{x}_{1}, \ldots, \boldsymbol{x}_{n})^{\top}$, $\boldsymbol{Z}=({Z}_{1}, \ldots, {Z}_{n})^{\top}$, $\boldsymbol{F}=(\boldsymbol{f}_{1}, \ldots, \boldsymbol{f}_{n})^{\top}$, $\boldsymbol{U}=(\boldsymbol{u}_{1}, \ldots, \boldsymbol{u}_{n})^{\top}$, $\boldsymbol{Y}=(Y_{1}, \ldots, Y_{n})^{\top}$ and $\boldsymbol{\varepsilon}=({\varepsilon}_{1},\ldots,{\varepsilon}_{n})^{\top}$.

  In the context of the FAPLM, the latent factor vectors $\boldsymbol{f}$ and $\boldsymbol{u}$ must be estimated, given that only the predictor vector 
		${\boldsymbol{x}}$ and the response $Y$ are directly observed.
        The constrained least squares estimator of $(\boldsymbol{F},\boldsymbol{B})$ based on ${\bX}$ is given by
		\begin{align}
			(\hat{\boldsymbol{F}},\hat{\boldsymbol{B}})=&\mathop{\arg\min}\limits_{\boldsymbol{F}\in \mathbb{R}^{n\times K},\boldsymbol{B}\in \mathbb{R}^{p\times K}}\|{\bX}-\boldsymbol{F}\boldsymbol{B}^{\top}\|_{\mathbb{F}}^{2} \\ \notag
			\text{subject} \ \  \text{to}& \  n^{-1}{\boldsymbol{F}}^{\top}\boldsymbol{F}=\boldsymbol{I}_{K} \ \ \text{and} \ \ {\boldsymbol{B}}^{\top}\boldsymbol{B} \ \ \text{is} \ \  \text{diagonal}. \notag
		\end{align}	
		Elementary manipulation yields that the columns of $\hat{\boldsymbol{F}}/\sqrt{n}$ are the eigenvectors corresponding to the largest $K$ eigenvalues of the matrix ${\bX}{{\bX}}^{\top}$ and 
				$\hat{\boldsymbol{B}}={\bX}^{\top}{\hat{\boldsymbol{F}}}({\hat{\boldsymbol{F}}}^{\top}\hat{\boldsymbol{F}})^{-1}=n^{-1}{\bX}^{\top}{\hat{\boldsymbol{F}}}$.
				Then the least squares estimator of $\boldsymbol{U}$ is given by $\hat{\boldsymbol{U}}={\bX}-\hat{\boldsymbol{F}}{\hat{\boldsymbol{B}}}^{\top}=(\boldsymbol{I}_{n}-n^{-1}\hat{\boldsymbol{F}}{\hat{\boldsymbol{F}}}^{\top}){\bX}$ \citep{fan2013large}. 
				Since $K$ corresponds to the number of spiked eigenvalues of ${\bX}{\bX}^{\top}$, it is generally small. Consequently, we treat $K$ as a fixed constant, as recommended by \cite{fanJ2023}.

				
				In practice, the number of latent factors, denoted  $K$, is often unknown, and determining $K$ in a data-driven way is a crucial challenge. Various methods have been proposed in the literature to estimate $K$ \citep{BaiandNg2002, Lam2012, Ahn2013, Fan2022}.
				In this paper, we adopt the ratio method for numerical studies.  Let $\lambda_{k}({\bX}{\bX}^{\top})$ be the $k$-th largest eigenvalue of ${\bX}{\bX}^{\top}$ and let $K_{\max}$ be a prescribed upper bound.
				The number of factors can
				be consistently estimated \citep{Luo2009, Lam2012, Ahn2013} by
				\begin{align}
					\hat{K}=\arg\max\limits_{k\leq K_{\max}}\frac{\lambda_{k}({\bX}{\bX}^{\top})}{\lambda_{k+1}({\bX}{\bX}^{\top})}, \notag
				\end{align}
				where $1 \leq K_{\max} \leq n$ is a prescribed upper bound for $K$. In  the subsequent theoretical analysis,  $K$ is treated as known. All theoretical results remain
valid, with the condition that $\hat{K}$ is a consistent estimator of $K$.

				\subsection{Regularization estimation} \label{Regularization Estimation}
           In this subsection,      we  consider the parameter estimation problem. 
				The regularized estimators of
				the unknown parameter vectors $\boldsymbol{\beta}_0$, $\boldsymbol{\varphi}_0$ and $\boldsymbol{\xi}_0$ 
				in \eqref{FAPLM_basis_approximation} are defined as follows,
				\begin{align} \label{empiricalestimator}
					(\hat{\boldsymbol{\beta}}, \hat{\boldsymbol{\varphi}}, \hat{\boldsymbol{\xi}})=\mathop{\arg\min}_{\boldsymbol{\beta}\in \mathbb{R}^{p}, \boldsymbol{\varphi}\in \mathbb{R}^{K}, \boldsymbol{\xi}\in \mathbb{R}^{M_{n}}} \left\{\frac{1}{2n}\left\|\boldsymbol{Y}-\hat{\boldsymbol{U}}\boldsymbol{\beta}-\hat{\boldsymbol{F}}\boldsymbol{\varphi}-\boldsymbol{\Pi}\boldsymbol{\xi}\right\|_{2}^{2}+\lambda\|\boldsymbol{\beta}\|_{1}\right\},
				\end{align}
				where $\boldsymbol{\Pi}=\{\boldsymbol{\Pi}(Z_{1}),\ldots,\boldsymbol{\Pi}(Z_{n})\}^{\top} \in  \mathbb{R}^{n\times M_{n}}$,  and 
				$\lambda >0 $ is a tuning parameter.
                Recall that $\hat{\boldsymbol{U}}=(\boldsymbol{I}_{n}-\hat{\boldsymbol{P}}){\bX}$,  where $\hat{\boldsymbol{P}}=n^{-1}\hat{\boldsymbol{F}}\hat{\boldsymbol{F}}^{\top}$ is the corresponding projection matrix. 
                Hence, $\hat{\boldsymbol{F}}^{\top}\hat{\boldsymbol{U}}={\boldsymbol{0}}$.
                It is straightforward to verify that the solution of \eqref{empiricalestimator} is
 				equivalent to
				\begin{align}(\hat{\boldsymbol{\beta}}, \hat{\boldsymbol{\xi}})&=\mathop{\arg\min}_{\boldsymbol{\beta}\in \mathbb{R}^{p}, \ \boldsymbol{\xi}\in \mathbb{R}^{M_{n}}}\left\{\frac{1}{2n}\left\|(\boldsymbol{I}_{n}-\hat{\boldsymbol{P}})(\boldsymbol{Y}-\boldsymbol{\Pi}\boldsymbol{\xi})-\hat{\boldsymbol{U}}\boldsymbol{\beta}\right\|_{2}^{2}+\lambda\|\boldsymbol{\beta}\|_{1}\right\}, \label{beta_phi_estimation} \\
                \hat{\boldsymbol{\varphi}}&=(\hat{\boldsymbol{F}}^{\top}\hat{\boldsymbol{F}})^{-1}\hat{\boldsymbol{F}}^{\top}(\boldsymbol{Y}-\hat{\boldsymbol{U}}\boldsymbol{\beta}-\boldsymbol{\Pi}\boldsymbol{\xi})=\frac{1}{n}\hat{\boldsymbol{F}}^{\top}(\boldsymbol{Y}-\boldsymbol{\Pi}\boldsymbol{\xi}).  \label{Least squares estimation of phi}
				\end{align}

				\section{Inference} \label{Statistical analysis}
				In this paper, we are interested in the problem of testing global parameters in model \eqref{FAPLM}, that is, the hypothesis testing problem is
				\begin{align}
					H_{0}: \boldsymbol{\beta}_0=\boldsymbol{0} \ \ \text{versus} \ \ H_{1}: \boldsymbol{\beta}_0\neq \boldsymbol{0}.
				\end{align}

Firstly, we correct the bias of the initial estimator $\hat{\boldsymbol{\beta}}$ in \eqref{beta_phi_estimation}. Building on the construction of the debiased estimator $\tilde{\boldsymbol{\beta}}$, we then develop a max-type test statistic and derive its asymptotic distribution. For statistical inference, we use the bootstrap method to obtain empirical critical values for hypothesis testing. 

				\subsection{Factor-adjusted projection debiased   test}

				In high-dimensional statistical inference, debiased Lasso estimators have been employed for  high-dimensional linear models, as developed by \cite{zhang2014confidence, van2014asymptotically, javanmard2014confidence}. 
				For the linear model $Y=\boldsymbol{X}\boldsymbol{\beta}_0+\boldsymbol{\varepsilon}$, given an initial Lasso estimator $\hat{\boldsymbol{\beta}}$ of $\boldsymbol{\beta}_0$, the debiased Lasso adds a correction term to $\hat{\beta}_{j}$  to remove the bias introduced by regularization. In particular, the correction term takes the form of 
				\begin{align} \label{non-partial debiased method}
					\frac{1}{n}\hat{\boldsymbol{\Gamma}}_{j}\boldsymbol{X}^{\top}(\boldsymbol{Y}-\boldsymbol{X}\hat{\boldsymbol{\beta}}),
				\end{align}
				where $n^{-1}\boldsymbol{X}^{\top}(\boldsymbol{Y}-\boldsymbol{X}\hat{\boldsymbol{\beta}})$ is the sample analogue of the population score function $\mathbb{E}\{\boldsymbol{x}_{i}^{\top}(Y_{i}-\boldsymbol{x}_{i}^{\top}\hat{\boldsymbol{\beta}})\}$, $\hat{\boldsymbol{\Gamma}}_{j}$ denotes the $j$-th row of $\hat{\boldsymbol{\Gamma}}$, which is an approximate inverse of $n^{-1}\boldsymbol{X}^{\top}\boldsymbol{X}$, whose population counterpart is $\mathbb{E}(\boldsymbol{x}_{i}\boldsymbol{x}_{i}^{\top})$. 
				

In our model \eqref{FAPLM}, the nuisance nonparametric component $g_{0}(\cdot)$ introduces additional bias. Consequently, the standard debiased Lasso cannot fully remove the influence of $g_{0}(\cdot)$, resulting in a limiting distribution that is not centered around zero. A commonly used solution is to project the parametric component onto the nonparametric space. 
\cite{zhu2017nonasymptotic} utilized the projection method to study the $\ell_1$-norm penalized estimator, demonstrating that the estimation error of the nonlinear component has only a second-order effect to the error of the linear component. This finding enables feasible inference based on the debiased estimator and  motivates us to apply similar projection strategy in analyzing the linear estimator
in FAPLM.

				
				Building on these methods, we propose
                the Factor-Adjusted Projection Debiased  Test (FAPDT). 
Specifically, we define the projection of ${U}_j, \ j=1,\dots,p$,  onto ${Z}$ as follows
	\begin{align}  
		{m}_{\mathcal{G}_{u},j}(Z)&=\mathop{\arg\min}\limits_{{m} \in \mathcal{G}_{u}} \mathbb{E}\left\{{U}_j-{m}(Z)\right\}^{2}.\label{projection u on z}
	\end{align}
	Here ${m}_{\mathcal{G}_{u},j}(Z)$   can be regarded as the projection of conditional expectation $\mathbb{E}({U}_j|Z)$  on the function space $\mathcal{G}_{u}$. In this paper, we approximate $\mathbb{E}({U}_j|Z)$  with B-spline
	method as well, that is, $\mathcal{G}_{u}=\left\{\boldsymbol{\Pi}( Z)^{\top}\boldsymbol{\xi}: \boldsymbol{\xi}\in \mathbb{R}^{M_{n}}\right\}$.
	Similar with the analysis of $g_{0}(\cdot)$, if $\mathbb{E}({U}_{j}|Z) \in \mathcal{H}_{r}$ for some $r \geq 1.5$,  it  can be well approximated by ${m}_{\mathcal{G}_{u},j}(Z)$  with sufficiently large $M_{n}$.  Denote
	$\boldsymbol{m}_{\mathcal{G}_{u}}(Z)=\left\{m_{\mathcal{G}_{u},1}(Z),\ldots,m_{\mathcal{G}_{u},p}(Z)\right\}^{\top} \in \mathbb{R}^p$,  ${{\boldsymbol{u}}}_i^{*}=\boldsymbol{u}_i-\mathbb{E}(\boldsymbol{u}_i|{Z}_i) \in \mathbb{R}^{p}, \ i=1,\ldots,n$ and  $\boldsymbol{\Sigma}_{u}^{*}=\Cov({{\boldsymbol{u}}}^{*})$. Let ${{\boldsymbol{U}}}^{*}=({{\boldsymbol{u}}}_1^{*},\ldots,{{\boldsymbol{u}}}_n^{*})^{\top} \in \mathbb{R}^{n \times p}$, $\boldsymbol{m}_{\mathcal{G}_{u}}(\boldsymbol{Z})=\{\boldsymbol{m}_{\mathcal{G}_{u}}(Z_1),\ldots,\boldsymbol{m}_{\mathcal{G}_{u}}(Z_n)\}^{\top}=\boldsymbol{\Pi}(\boldsymbol{\Pi}^{\top}\boldsymbol{\Pi})^{-1}\boldsymbol{\Pi}^{\top}\boldsymbol{U} \in \mathbb{R}^{n \times p}$, and  nonparametric surrogates  $\tilde{\boldsymbol{U}}=\hat{\boldsymbol{U}}-\hat{\boldsymbol{m}}_{\mathcal{G}_{u}}(\boldsymbol{Z})$, with
				$\hat{\boldsymbol{m}}_{\mathcal{G}_{u}}(\boldsymbol{Z})=\boldsymbol{\Pi}(\boldsymbol{\Pi}^{\top}\boldsymbol{\Pi})^{-1}\boldsymbol{\Pi}^{\top}\hat{\boldsymbol{U}}  \in \mathbb{R}^{n \times p}$. 

				Let ${\hat{\boldsymbol{\Theta}}} \in \mathbb{R}^{p\times p}$ be an approximation for the inverse of the Gram matrix $\tilde{\boldsymbol{\Sigma}}=n^{-1}\tilde{\boldsymbol{U}}^{\top}\tilde{\boldsymbol{U}}$ and $\boldsymbol{\Theta}=(\boldsymbol{\Sigma}_{u}^{*})^{-1}$.
				There are many ways to estimate such a precision matrix. 
				For instance, node-wise regression proposed by \cite{zhang2014confidence} and \cite{van2014asymptotically}, as well as the CLIME-type estimator provided in \cite{cai2011constrained}, \cite{javanmard2014confidence} and \cite{avella2018robust}. We construct ${\hat{\boldsymbol{\Theta}}}$ by employing a similar method   proposed by \cite{cai2011constrained}.  Concretely,  ${\hat{\boldsymbol{\Theta}}}$ is the solution to the following
				constrained optimization problem.
				\begin{align}\label{Theta hat estimation}
					\hat{\boldsymbol{\Theta}}&=\mathop{\arg\min}\limits_{\boldsymbol{\Theta} \in \mathbb{R}^{p \times p}}\|\boldsymbol{\Theta}\|_{1} \notag \\
					&\text{s.t.} \ \|\boldsymbol{\Theta}\tilde{\boldsymbol{\Sigma}}-\boldsymbol{I}_{p}\|_{\max}\leq \delta_{n},
				\end{align}
				where $\delta_{n}$ is a predetermined tuning parameter.
				In general, $\hat{\boldsymbol{\Theta}}$ is not symmetric since there is no symmetry constraint on \eqref{Theta hat estimation}. Symmetry can be enforced through additional operations. Denote $\hat{\boldsymbol{\Theta}}=(\hat{\Gamma}_{ij})_{1 \leq i,j \leq p}$. Write $\hat{\boldsymbol{\Theta}}^{\text{sym}}=(\hat{\Gamma}_{ij}^{\text{sym}})_{1 \leq i,j \leq p}$, where $\hat{\Gamma}_{ij}^{\text{sym}}$ is defined as 
				\begin{align}
					\hat{\Gamma}_{ij}^{\text{sym}}=\hat{\Gamma}_{ij}\mathbb{I}(|\hat{\Gamma}_{ij}| \leq |\hat{\Gamma}_{ji}|)+\hat{\Gamma}_{ji}\mathbb{I}(|\hat{\Gamma}_{ij}|> |\hat{\Gamma}_{ji}|). \notag 
				\end{align}
				Apparently, $\hat{\boldsymbol{\Theta}}^{\text{sym}}$ is a symmetric matrix. Without loss of generality, we assume that $\hat{\boldsymbol{\Theta}}$ is symmetric in the rest of the paper. 
				Given $\hat{\boldsymbol{\Theta}}$ and $\tilde{\boldsymbol{U}}$, the  Factor-adjusted  projection debiased estimator of $\boldsymbol{\beta}_0$ is then defined as
				\begin{align} \label{debiased estimator}
					\tilde{\boldsymbol{\beta}}=\hat{\boldsymbol{\beta}}+\frac{1}{n}\hat{\boldsymbol{\Theta}}\tilde{\boldsymbol{U}}^{\top}\left(\boldsymbol{Y}-\hat{\boldsymbol{U}}\hat{\boldsymbol{\beta}}-\hat{\boldsymbol{F}}\tilde{\boldsymbol{\varphi}}-\boldsymbol{\Pi}\hat{\boldsymbol{\xi}}\right),
				\end{align} 
				with  $\hat{\boldsymbol{\beta}}$ and $\hat{\boldsymbol{\xi}}$ defined in  \eqref{beta_phi_estimation}, and $	\tilde{\boldsymbol{\varphi}}=n^{-1}\hat{\boldsymbol{F}}^{\top}(\boldsymbol{Y}-\boldsymbol{\Pi}\hat{\boldsymbol{\xi}})$.
                
                In this paper, we consider the  max-type statistic $\sqrt{n}\|\tilde{\boldsymbol{\beta}}-\boldsymbol{\beta}_0\|_{\infty}$.
To determine its asymptotic distribution, we  define  $\boldsymbol{V}=(V_{1},\ldots,V_{p})^{\top}$ as a mean-zero Gaussian random vector with the same covariance matrix as $n^{-{1}/{2}}  \boldsymbol{\Theta}{\boldsymbol{U}}^{*\top}\boldsymbol{\varepsilon}$, that is, 
				\begin{align}
					\Cov(\boldsymbol{V})=\Cov\left(\frac{1}{\sqrt{n}} \boldsymbol{\Theta}{\boldsymbol{U}}^{*\top}\boldsymbol{\varepsilon}\right)={\sigma}_{\varepsilon}^{2}\boldsymbol{\Theta}.
				\end{align}
				Here ${\sigma}_{\varepsilon}^{2}$ is the variance of $\varepsilon$.
	\subsection{Gaussian multiplier bootstrap}
				Since ${\sigma}_{\varepsilon}$ and $\boldsymbol{\Theta}$ are unknown, obtaining the critical value $c_{1-\alpha}$ directly poses a challenge. \cite{fanJ2023} suggests using the Gaussian multiplier bootstrap to derive it. We provide an overview of the procedural steps.	
				\begin{itemize}
					\item [1.]     
					Generate i.i.d random variables $\upsilon_{1},\ldots, \upsilon_{n} \sim \mathrm{N}(0,1)$  independent of the observed dataset $\mathcal{D}=\{\boldsymbol{x}_{1},\ldots,\boldsymbol{x}_{n}, Y_{1},\ldots,Y_{n}\}$, and compute
					\begin{align} \label{bootstrap statistic}
						\hat{L}=\frac{1}{\sqrt{n}}\|\hat{\boldsymbol{\Theta}}\tilde{\boldsymbol{U}}^{\top}\boldsymbol{\upsilon}\|_{\infty},
					\end{align}
					where $\tilde{\boldsymbol{U}}=\hat{\boldsymbol{U}}-\hat{\boldsymbol{m}}_{\mathcal{G}_{u}}(\boldsymbol{Z})$, with $\hat{\boldsymbol{m}}_{\mathcal{G}_{u}}(\boldsymbol{Z})=\boldsymbol{\Pi}(\boldsymbol{\Pi}^{\top}\boldsymbol{\Pi})^{-1}\boldsymbol{\Pi}^{\top}\hat{\boldsymbol{U}}$,  and  $\boldsymbol{\upsilon}=\left(\upsilon_{1}, \ldots, \upsilon_{n}\right)^{\top}.$
					\item [2.]   Independently repeat the initial step $B$ times to obtain $\hat{L}_{1},\ldots,\hat{L}_{B}$. 
					Estimate the critical value $c_{1-\alpha}$
					via the $(1-\alpha)$-quantile of the empirical distribution of the bootstrap statistics:
					\begin{align} \label{critical value estimate}
						\hat{c}_{1-\alpha}=\inf \left\{t \geq0: H_{B}(t)=\frac{1}{B}\sum\limits_{b=1}^{B}\mathbb{I}(\hat{L}_{b} \leq t) \geq 1-\alpha\right\}.  
					\end{align} 
                    \end{itemize}
 Our test statistic for $H_{0}$ is given by
					\begin{align} \label{test statistic construction}
						T_{n}=\frac{\sqrt{n}\|\tilde{\boldsymbol{\beta}}\|_{\infty}}{\hat{\sigma}_{\varepsilon}},
					\end{align}
					where $\tilde{\boldsymbol{\beta}}$ is defined in \eqref{debiased estimator}, and $\hat{\sigma}_{\varepsilon}^{2}$ is defined as 
                    \begin{align} \label{sigma-epsilon hat}\hat{\sigma}_{\varepsilon}^{2}=\frac{1}{n}\sum_{i=1}^{n}(Y_{i}-\hat{\boldsymbol{u}}_{i}^{\top}\hat{\boldsymbol{\beta}}-\hat{\boldsymbol{f}}_{i}^{\top}\tilde{\boldsymbol{\varphi}}-\boldsymbol{\pi}_{i}^{\top}\hat{\boldsymbol{\xi}})^{2},
					\end{align}
with $\hat{\boldsymbol{\beta}}$ and $\hat{\boldsymbol{\xi}}$ being defined in \eqref{beta_phi_estimation}, and $	\tilde{\boldsymbol{\varphi}}=n^{-1}\hat{\boldsymbol{F}}^{\top}(\boldsymbol{Y}-\boldsymbol{\Pi}\hat{\boldsymbol{\xi}})$. We reject the null hypothesis $H_{0}$ if and only if  $
						T_{n} \geq \hat{c}_{1-\alpha}$.

\section{Theoretical results }\label{Theoretical Results}

In this section, we present the theoretical properties of the proposed estimation and inference procedures. To set the stage for the main results, we begin by outlining the necessary assumptions.
Firstly, we introduce the identifiability conditions similar to those in \cite{fan2013large}.
\begin{assumption} \label{basic asssumption in factor model}
			Assume that $\Cov(\boldsymbol{f}_{t})=\boldsymbol{I}_{K}$, $t=1,\ldots, n$, and $\boldsymbol{B}^{\top}\boldsymbol{B}$ is diagonal.
		\end{assumption}
More specifically, let $\hat{\boldsymbol{\Sigma}}$, $\hat{\boldsymbol{\Lambda}}=\text{diag}(\hat{\lambda}_{1},\ldots, \hat{\lambda}_{K})$ and $\hat{\boldsymbol{\Omega}}=\left(\hat{\boldsymbol{\eta}}_{1},\ldots, \hat{\boldsymbol{\eta}}_{K}\right)$ be initial pilot estimators of the covariance matrix $\boldsymbol{\Sigma}$ of $\boldsymbol{x}$, the matrix consisting of its leading  $K$ eigenvalues $\boldsymbol{\Lambda}=\text{diag}(\lambda_{1},\ldots, \lambda_{K})$, and the matrix consisting of  their corresponding leading $K$ orthonormalized eigenvectors $\boldsymbol{\Omega}=(\boldsymbol{\eta}_{1},\ldots, \boldsymbol{\eta}_{K})$, respectively.
				We
				adopt the regularity assumptions in \cite{fanJ2023, fan2013large,Bai2003} and other literature on high-dimensional factor analysis.
\begin{assumption} \label{factorassumption} 
					We make the following assumptions. 		
					\begin{itemize}
						\item[(i)] There exists some positive constant $c_{0} < \infty$ such that $\|\boldsymbol{f}\|_{\psi_{2}}\leq c_{0}$ and $\|\boldsymbol{u}\|_{\psi_{2}}\leq c_{0}$. In addition, $\mathbb{E}(U_{ij})=\mathbb{E}(F_{ik})=0, \ i=1,\ldots,n, \ j=1,\ldots,p, \  k=1,\ldots,K$.
						\item[(ii)] There exists a constant $\tau >1$ such that $p/\tau\leq \lambda_{min}(\boldsymbol{B}^{\top}\boldsymbol{B})\leq \lambda_{\max}(\boldsymbol{B}^{\top}\boldsymbol{B})\leq p\tau$. 
						\item[(iii)]  Let $\boldsymbol{\Sigma}_{u}=\Cov(\boldsymbol{u})$, $\|\boldsymbol{\Sigma}_{u}\|_{2}$ is bounded. In addition,  there exists a constant $\Upsilon>0$ such that $\|\boldsymbol{B}\|_{\max}\leq \Upsilon$ and $$\mathbb{E}|\boldsymbol{u}^{\top}\boldsymbol{u}-\tr(\boldsymbol{\Sigma}_{u})|^{4}\leq \Upsilon p^{2}.$$ 
						\item[(iv)] There exists a positive constant $\kappa<1$ such that $\kappa \leq \lambda_{min}(\boldsymbol{\Sigma}_{u})$, $\|\boldsymbol{\Sigma}_{u}\|_{1}\leq 1/\kappa$ and $\min\limits_{1\leq k,l\leq p} \Var({U}_{ik}{U}_{il})\geq \kappa$. \\
					\end{itemize}
				\end{assumption}
				\begin{assumption} (Initial pilot estimators). \label{Loadings and initial pilot estimators}
					Assume that 
					$\hat{\boldsymbol{\Sigma}}$, $\hat{\boldsymbol{\Lambda}}$ and $\hat{\boldsymbol{\Omega}}$ satisfy $\|\hat{\boldsymbol{\Sigma}}-\boldsymbol{\Sigma}\|_{\max}=O_{\mathbb{P}}\{\sqrt{(\log p)/n}\}$, $\|(\hat{\boldsymbol{\Lambda}}-\boldsymbol{\Lambda})\hat{\boldsymbol{\Lambda}}^{-1}\|_{\max}=O_{\mathbb{P}}\{\sqrt{(\log p)/n}\}$, and $\|\hat{\boldsymbol{\Omega}}-\boldsymbol{\Omega}\|_{\max}=O_{\mathbb{P}}\{\sqrt{(\log p)/(np)}\}$.
				\end{assumption}
				\begin{remark}
					Assumption \ref{Loadings and initial pilot estimators} is adopted from \cite{bayle2022factor} and is applicable in numerous scenarios, including the sample covariance matrix under sub-Gaussian distributions \citep{fan2013large}. Furthermore, estimators such as the marginal and spatial Kendall’s tau \citep{fan2018large} and the elementwise adaptive Huber estimator \citep{fan2019robust} are consistent with this assumption.
				\end{remark}

				We summarize the theoretical results related to consistent factor estimation in Lemma B.1 in Supplementary Material,  which directly follows from Proposition 2.1 in \cite{fanJ2023} and Lemma 3.1 in \cite{bayle2022factor}.

                \subsection{Estimation error}
				
In the following, we investigate the error bounds of the estimators in \eqref{beta_phi_estimation} and analyze how factor estimation and nonparametric approximation  affect parameter estimation. 
				Recall that ${{\boldsymbol{U}}}^{*}=\boldsymbol{U}-\mathbb{E}(\boldsymbol{U}|\boldsymbol{Z})$,  $\tilde{\boldsymbol{U}}=\hat{\boldsymbol{U}}-\hat{\boldsymbol{m}}_{\mathcal{G}_{u}}(\boldsymbol{Z})$, and  $\boldsymbol{\Sigma}_{u}^{*}=\Cov({{\boldsymbol{u}}}^{*})$. To investigate the consistency property of $\hat{\boldsymbol{\beta}}$, we need to make the following technical conditions.  
				
				\begin{assumption} \label{consistency property assumption}
					We make the following assumptions.
					\begin{itemize}
						\item[(i)]   There exist  positive constants $c_{1}, c_{u} < \infty$, such that $\|{\varepsilon}\|_{\psi_{2}}\leq c_{1}$ and $\|{{\boldsymbol{u}}}^*\|_{\psi_{2}} \leq  c_{u}$.
						\item[(ii)] The covariance matrix $\boldsymbol{\Sigma}=\Cov(\boldsymbol{x})>0$, and   $\lambda_{\min}(\boldsymbol{\Sigma}_{u}^{*})$ is bounded away from zero.
						\item[(iii)] There exists  some constant  $r\geq1.5$, such that $g_{0} \in \mathcal{H}_{r}$, and each component of  $\mathbb{E}(\boldsymbol{u}|Z)$  belongs to $\mathcal{H}_{r}$. 
						\item[(iv)] The variable $Z$ is bounded as $\sup |Z|< \infty$. Without loss of generality, we assume that $Z \in [0,1]$.
					\end{itemize}
				\end{assumption}
Assumption \ref{consistency property assumption} is mild and commonly adopted in high-dimensional analysis. 
				Specifically, condition (i) imposes the sub-Gaussian assumption on $\varepsilon$, consistent with \cite{fanJ2023}. Condition (ii) requires the covariance matrix of $\boldsymbol{x}$ to be positive definite and places a lower bound on the minimum eigenvalue of $\boldsymbol{\Sigma}_{u}^{*}$, both of which are standard requirements in high-dimensional literature \citep{li2022integrative, ouyang2023high, fanJ2023}. Condition (iii) is a standard assumption on the smoothness of the function for the B-spline method \citep{xue2006additive, sherwood2016partially}. Finally, condition (iv) assumes that the variable $Z$ takes values within a compact interval $[0, 1]$, as also assumed in \cite{xie2009scad, lian2019projected}.

				We next  analyze the estimation consistency of 
				$\hat{\boldsymbol{\beta}}$. Denote  $\mathcal{S}^{*}=\{j \in [p]: {\beta}_{j} \neq 0\}$,    $s=\left|\mathcal{S}^{*}\right|$  is its cardinality, and $\mathcal{S}^{*C}=[p]/\mathcal{S}^{*}$.
				\begin{theorem} \label{consistency property}
					Suppose that the true parameter $\boldsymbol{\beta}_{0}$ satisfies that $\|\boldsymbol{\beta}_{0}\|_{\infty} \leq C$ and $s=o\{{\sqrt{n}}/{(\log p)}+\sqrt{{n}/(M_{n}\log p)}\}$. Additionally, we assume that				
					$(\log p)^{12}=O({n})$.
					By taking $\lambda=O\{\sqrt{(\log p)/n}\}$ and  $M_{n}\approx n^{1/(2r+1)}$, under Assumptions \ref{basic asssumption in factor model}-\ref{consistency property assumption}, the $\hat{\boldsymbol{\beta}}$ defined in \eqref{beta_phi_estimation} satisfies that
					\begin{align}
						\|\hat{\boldsymbol{\beta}}-\boldsymbol{\beta}_{0}\|_{2}=O_{\mathbb{P}} \left(\lambda \sqrt{s}\right), \ \ and \ \ \|\hat{\boldsymbol{\beta}}-\boldsymbol{\beta}_{0}\|_{1}=O_{\mathbb{P}} \left(\lambda {s}\right). \notag 
					\end{align}
				\end{theorem}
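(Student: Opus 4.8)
The plan is to mimic the classical oracle-inequality argument for the Lasso (as in \citealp{bickel2009simultaneous}), but carried out on the \emph{feasible} objective \eqref{beta_phi_estimation}, where the true regressors $\boldsymbol{U}^{*}$ and the true nonparametric piece $g_0(\boldsymbol{Z})$ have been replaced by their estimated/approximated surrogates $\hat{\boldsymbol{U}}$, $\boldsymbol{\Pi}$. First I would write the basic inequality: since $(\hat{\boldsymbol{\beta}},\hat{\boldsymbol{\xi}})$ minimizes the penalized criterion, comparing its value at $(\hat{\boldsymbol{\beta}},\hat{\boldsymbol{\xi}})$ with the value at $(\boldsymbol{\beta}_0,\boldsymbol{\xi}_0)$ gives, after expanding the square and letting $\boldsymbol{\Delta}=\hat{\boldsymbol{\beta}}-\boldsymbol{\beta}_0$,
\begin{align}
\frac{1}{2n}\bigl\|(\boldsymbol{I}_n-\hat{\boldsymbol{P}})\hat{\boldsymbol{U}}\boldsymbol{\Delta}\bigr\|_2^2
\;\le\; \frac{1}{n}\bigl\langle \hat{\boldsymbol{U}}^{\top}(\boldsymbol{I}_n-\hat{\boldsymbol{P}})\boldsymbol{r},\,\boldsymbol{\Delta}\bigr\rangle
+\lambda\bigl(\|\boldsymbol{\beta}_0\|_1-\|\hat{\boldsymbol{\beta}}\|_1\bigr), \notag
\end{align}
where $\boldsymbol{r}$ collects all the ``residual'' terms: the spline approximation error $g_0(\boldsymbol{Z})-\boldsymbol{\Pi}\boldsymbol{\xi}_0$, the factor-estimation error $(\hat{\boldsymbol{F}}\boldsymbol{\varphi}_0-\boldsymbol{F}\boldsymbol{\varphi}_0)$ and $(\hat{\boldsymbol{U}}-\boldsymbol{U})\boldsymbol{\beta}_0$, the projection residual, and the genuine noise $\boldsymbol{\varepsilon}$. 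The two pillars of the proof are then (a) a \emph{deviation bound} showing $\tfrac1n\|\hat{\boldsymbol{U}}^{\top}(\boldsymbol{I}_n-\hat{\boldsymbol{P}})\boldsymbol{r}\|_{\infty}\lesssim \lambda$ with high probability, so that the choice $\lambda\asymp\sqrt{(\log p)/n}$ dominates the empirical process, and (b) a \emph{restricted eigenvalue} (compatibility) condition for the effective Gram matrix $\tfrac1n\hat{\boldsymbol{U}}^{\top}(\boldsymbol{I}_n-\hat{\boldsymbol{P}})\hat{\boldsymbol{U}}$ on the cone $\{\|\boldsymbol{\Delta}_{\mathcal{S}^{*C}}\|_1\le 3\|\boldsymbol{\Delta}_{\mathcal{S}^{*}}\|_1\}$.

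For step (a) I would split $\boldsymbol{r}$ term by term. The noise term $\tfrac1n\|\boldsymbol{U}^{*\top}\boldsymbol{\varepsilon}\|_\infty$ is controlled by a standard sub-Gaussian maximal inequality under Assumption \ref{consistency property assumption}(i), giving the $\sqrt{(\log p)/n}$ rate. The spline bias contributes $O(M_n^{-r})$ per coordinate by \eqref{unparameter error}, and with $M_n\asymp n^{1/(2r+1)}$ and $r\ge 1.5$ this is $o(\sqrt{(\log p)/n})$ up to log factors; the projection onto $\mathrm{span}(\boldsymbol{\Pi})$ only helps here because $(\boldsymbol{I}_n-\hat{\boldsymbol{P}})$ and the spline hat matrix are contractions. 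The factor-replacement terms — both $\hat{\boldsymbol{U}}-\boldsymbol{U}$ and $\hat{\boldsymbol{F}}-\boldsymbol{F}\boldsymbol{H}$ for a rotation $\boldsymbol{H}$ — are handled using the consistency rates for PCA estimation of factors collected in Lemma B.1 (the $\sqrt{(\log p)/n}+1/\sqrt{p}$-type bounds from \citealp{fanJ2023,bayle2022factor}); the key point, and the reason the theorem can claim the factor error is ``negligible'', is that $\hat{\boldsymbol{F}}^{\top}\hat{\boldsymbol{U}}=\boldsymbol{0}$ kills the leading cross term and pushes the factor contribution to second order. The conditions $(\log p)^{12}=O(n)$ and the sparsity bound $s=o\{\sqrt{n}/(\log p)+\sqrt{n/(M_n\log p)}\}$ are exactly what is needed to make each of these residual pieces $O(\lambda)$ simultaneously after multiplying by $\sqrt{s}$ in the final bound.

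For step (b) I would first show that $\tfrac1n\hat{\boldsymbol{U}}^{\top}(\boldsymbol{I}_n-\hat{\boldsymbol{P}})\hat{\boldsymbol{U}}$ is close in $\|\cdot\|_{\max}$ to $\boldsymbol{\Sigma}_u^{*}$ (using $\lambda_{\min}(\boldsymbol{\Sigma}_u^{*})$ bounded away from $0$ by Assumption \ref{consistency property assumption}(ii) and the factor/spline approximation bounds again), then invoke a standard argument (\citealp{raskutti2010restricted}) that a population positive-definite matrix plus a uniformly small perturbation satisfies the restricted eigenvalue condition on a sparse cone, with $s\log p/n\to 0$. Combining the basic inequality with (a) and (b) in the usual way: on the good event one gets $\|\boldsymbol{\Delta}_{\mathcal{S}^{*C}}\|_1\le 3\|\boldsymbol{\Delta}_{\mathcal{S}^{*}}\|_1$, then the compatibility constant converts $\|(\boldsymbol{I}_n-\hat{\boldsymbol{P}})\hat{\boldsymbol{U}}\boldsymbol{\Delta}\|_2^2$ into $\gtrsim\|\boldsymbol{\Delta}_{\mathcal{S}^{*}}\|_1^2/s$, and solving the resulting quadratic inequality yields $\|\boldsymbol{\Delta}\|_2\lesssim\lambda\sqrt{s}$ and $\|\boldsymbol{\Delta}\|_1\lesssim\lambda s$. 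The main obstacle I anticipate is the bookkeeping in step (a): carefully tracking how the PCA factor-estimation errors propagate through $(\boldsymbol{I}_n-\hat{\boldsymbol{P}})$ and the spline projection, and verifying that the cross terms involving $\hat{\boldsymbol{F}}-\boldsymbol{F}\boldsymbol{H}$ genuinely land at second order rather than contaminating the first-order $\sqrt{(\log p)/n}$ rate — this is where the identifiability Assumptions \ref{basic asssumption in factor model}, \ref{factorassumption} and the orthogonality $\hat{\boldsymbol{F}}^{\top}\hat{\boldsymbol{U}}=\boldsymbol{0}$ must be used most delicately.
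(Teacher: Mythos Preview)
Your broad strategy (basic inequality $+$ deviation bound $+$ restricted eigenvalue) is the right framework, but the basic inequality you wrote down is not correct, and the omission is exactly the point where the paper's argument is nontrivial. When you compare the objective at $(\hat{\boldsymbol{\beta}},\hat{\boldsymbol{\xi}})$ with its value at $(\boldsymbol{\beta}_0,\boldsymbol{\xi}_0)$, the left-hand side is (up to the projection $\boldsymbol{I}_n-\hat{\boldsymbol{P}}$) $\tfrac{1}{2n}\|\hat{\boldsymbol{U}}\boldsymbol{\Delta}+\boldsymbol{\Pi}(\hat{\boldsymbol{\xi}}-\boldsymbol{\xi}_0)\|_2^2$, not $\tfrac{1}{2n}\|\hat{\boldsymbol{U}}\boldsymbol{\Delta}\|_2^2$; and on the right the cross term $\tfrac1n\langle\boldsymbol{\Pi}^{\top}\boldsymbol{r},\hat{\boldsymbol{\xi}}-\boldsymbol{\xi}_0\rangle$ appears. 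Your list of ingredients in $\boldsymbol{r}$ contains no estimator-dependent piece, so the nuisance increment $\hat{\boldsymbol{\xi}}-\boldsymbol{\xi}_0$ has simply been dropped. (Incidentally, the factor-replacement terms $(\hat{\boldsymbol{F}}-\boldsymbol{F})\boldsymbol{\varphi}_0$ and $(\hat{\boldsymbol{U}}-\boldsymbol{U})\boldsymbol{\beta}_0$ you list do \emph{not} appear in $\boldsymbol{r}$ either: because $(\boldsymbol{I}_n-\hat{\boldsymbol{P}})\boldsymbol{X}=\hat{\boldsymbol{U}}$ exactly, the residual at $(\boldsymbol{\beta}_0,\boldsymbol{\xi}_0)$ is just $(\boldsymbol{I}_n-\hat{\boldsymbol{P}})\{g_0(\boldsymbol{Z})-\boldsymbol{\Pi}\boldsymbol{\xi}_0+\boldsymbol{\varepsilon}\}$.) A one-shot oracle inequality of the kind you propose would therefore leave $\|\hat{\boldsymbol{\beta}}-\boldsymbol{\beta}_0\|_2$ contaminated by the nonparametric rate $\sqrt{M_n/n}$, which is slower than $\lambda\sqrt{s}$ and does not give the theorem.

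The paper removes this coupling by a reparametrization-plus-two-step argument. Setting $\boldsymbol{\eta}=\boldsymbol{\xi}+(\boldsymbol{\Pi}^{\top}\boldsymbol{\Pi})^{-1}\boldsymbol{\Pi}^{\top}\hat{\boldsymbol{U}}\boldsymbol{\beta}$ replaces $\hat{\boldsymbol{U}}$ by the spline-orthogonalized design $\tilde{\boldsymbol{U}}=(\boldsymbol{I}_n-\boldsymbol{P}_{\boldsymbol{\Pi}})\hat{\boldsymbol{U}}$, so the loss decomposes into a $\tilde{\boldsymbol{U}}$-block and a $\boldsymbol{\Pi}$-block that are (nearly) orthogonal. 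Step~1 runs the KKT/RSC argument on the \emph{joint} vector $\boldsymbol{\alpha}=(\boldsymbol{\beta}^{\top},\boldsymbol{\eta}^{\top})^{\top}$ and obtains the preliminary bound $\|\hat{\boldsymbol{\alpha}}-\boldsymbol{\alpha}_0\|_2\lesssim \lambda\sqrt{s}+\sqrt{M_n/n}+\sqrt{\log p}\,(\log n)/\sqrt{n}$, where the extra terms come from $\|\nabla_2\mathcal{L}_n(\boldsymbol{\beta}_0,\boldsymbol{\eta}_0)\|_2$. Step~2 then compares at $(\hat{\boldsymbol{\beta}},\hat{\boldsymbol{\eta}})$ versus $(\boldsymbol{\beta}_0,\hat{\boldsymbol{\eta}})$; the cross term $\tfrac1n\tilde{\boldsymbol{U}}^{\top}(\boldsymbol{I}_n-\hat{\boldsymbol{P}})\boldsymbol{\Pi}(\boldsymbol{\eta}_0-\hat{\boldsymbol{\eta}})$ is now controllable because $\|\hat{\boldsymbol{\eta}}-\boldsymbol{\eta}_0\|_2$ has been bounded in Step~1, and one recovers the sharp rate $\lambda\sqrt{s}$ for $\hat{\boldsymbol{\beta}}$ alone. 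Your steps (a) and (b) are essentially the ingredients for Step~2, but they cannot stand on their own without the Step~1 bound on the nuisance; the reparametrization (or equivalently an explicit profile over $\boldsymbol{\xi}$) is the missing idea.
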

				
				\begin{remark}
					Theorem \ref{consistency property} establishes the $\ell_{1}$- and $\ell_{2}$-norm error bounds for our estimator  $\hat{\boldsymbol{\beta}}$.  The order   $\sqrt{s(\log p)/n}$ agrees with the minimax rate of the standard Lasso problem \citep{raskutti2011minimax}. 
                    From the proof of Theorem \ref{consistency property},  it is evident that the contribution of the nonlinear approximation error and the latent factor estimation error to the linear component's estimation error is
					\begin{align*}
						\sqrt{s}M_n^{-2r}(\log p)(\log n)+\frac{\sqrt{s}(\log p)^{3/2}(\log n)^2}{n}+M_n^{-2r}\sqrt{sM_n(\log p)},
					\end{align*}
					which is trivial compared to $\sqrt{s(\log p)/n}$ under the conditions outlined in Theorem \ref{consistency property}.  
				\end{remark}

Using the estimated parameter $\hat{\boldsymbol{\xi}}$ from \eqref{beta_phi_estimation}, an estimator of the nonparametric function $g_0(Z)$ can be constructed as $\hat{g}(Z) = \boldsymbol{\pi}(Z)^{\top}\hat{\boldsymbol{\xi}}$.
                The error bound for this estimator is given by the following Proposition.
\begin{proposition} (Error bound of nonlinear part). \label{g_hat error bound proposition}
    Suppose that the conditions in Theorem \ref{consistency property}  hold. The estimator $\hat{g}(Z)=\boldsymbol{\pi}(Z)^{\top}\hat{\boldsymbol{\xi}}$ satisfies that 
    \begin{align}
    {\frac{1}{n}\sum\limits_{i=1}^{n}\left\{\hat{g}(Z_i)-g_0(Z_i)\right\}^2 }=O_{\mathbb{P}}\left\{M_n^{-2r}+{\frac{({\log p})(\log n)^{2}}{{n}}}+s^2\frac{(\log p)^2}{{n}}\right\}. \notag 
    \end{align}
\end{proposition}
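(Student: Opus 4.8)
The plan is to separate the spline approximation (bias) error from the estimation error and to control the latter through the first-order condition that defines $\hat{\boldsymbol{\xi}}$. Write $g_{0}(\boldsymbol{Z})=(g_{0}(Z_{1}),\dots,g_{0}(Z_{n}))^{\top}$ and $\boldsymbol{R}=g_{0}(\boldsymbol{Z})-\boldsymbol{\Pi}\boldsymbol{\xi}_{0}$. The spline approximation bound \eqref{unparameter error} (valid under Assumption \ref{consistency property assumption}(iii)--(iv)) gives $\|\boldsymbol{R}\|_{\infty}\le CM_{n}^{-r}$, hence $n^{-1}\|\boldsymbol{R}\|_{2}^{2}\lesssim M_{n}^{-2r}$, and the triangle inequality yields
\[
\frac{1}{n}\sum_{i=1}^{n}\{\hat{g}(Z_{i})-g_{0}(Z_{i})\}^{2}=\frac{1}{n}\|\boldsymbol{\Pi}\hat{\boldsymbol{\xi}}-g_{0}(\boldsymbol{Z})\|_{2}^{2}\le \frac{2}{n}\|\boldsymbol{\Pi}(\hat{\boldsymbol{\xi}}-\boldsymbol{\xi}_{0})\|_{2}^{2}+O(M_{n}^{-2r}).
\]
So it suffices to bound $n^{-1}\|\boldsymbol{\Pi}(\hat{\boldsymbol{\xi}}-\boldsymbol{\xi}_{0})\|_{2}^{2}$ by the claimed rate.

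Since $\hat{\boldsymbol{\xi}}$ is an unpenalized block in \eqref{beta_phi_estimation}, its stationarity condition is $\boldsymbol{\Pi}^{\top}(\boldsymbol{I}_{n}-\hat{\boldsymbol{P}})\{(\boldsymbol{I}_{n}-\hat{\boldsymbol{P}})(\boldsymbol{Y}-\boldsymbol{\Pi}\hat{\boldsymbol{\xi}})-\hat{\boldsymbol{U}}\hat{\boldsymbol{\beta}}\}=\boldsymbol{0}$. Using $\hat{\boldsymbol{U}}=(\boldsymbol{I}_{n}-\hat{\boldsymbol{P}})\boldsymbol{X}$, $\boldsymbol{Y}=\boldsymbol{X}\boldsymbol{\beta}_{0}+g_{0}(\boldsymbol{Z})+\boldsymbol{\varepsilon}$ and $g_{0}(\boldsymbol{Z})=\boldsymbol{\Pi}\boldsymbol{\xi}_{0}+\boldsymbol{R}$, together with $(\boldsymbol{I}_{n}-\hat{\boldsymbol{P}})^{2}=\boldsymbol{I}_{n}-\hat{\boldsymbol{P}}$ and $(\boldsymbol{I}_{n}-\hat{\boldsymbol{P}})\hat{\boldsymbol{U}}=\hat{\boldsymbol{U}}$, this rearranges to
\[
\boldsymbol{A}(\hat{\boldsymbol{\xi}}-\boldsymbol{\xi}_{0})=\boldsymbol{\Pi}^{\top}\hat{\boldsymbol{U}}(\boldsymbol{\beta}_{0}-\hat{\boldsymbol{\beta}})+\boldsymbol{\Pi}^{\top}(\boldsymbol{I}_{n}-\hat{\boldsymbol{P}})\boldsymbol{R}+\boldsymbol{\Pi}^{\top}(\boldsymbol{I}_{n}-\hat{\boldsymbol{P}})\boldsymbol{\varepsilon},\qquad \boldsymbol{A}:=\boldsymbol{\Pi}^{\top}(\boldsymbol{I}_{n}-\hat{\boldsymbol{P}})\boldsymbol{\Pi}.
\]
I would then invoke a supporting lemma on the spline Gram matrix stating that, with probability tending to one, $\lambda_{\max}(\boldsymbol{\Pi}^{\top}\boldsymbol{\Pi})\asymp n$ and $\lambda_{\min}(\boldsymbol{A})\asymp\lambda_{\max}(\boldsymbol{A})\asymp n$; the latter holds because $\boldsymbol{\Pi}^{\top}\hat{\boldsymbol{P}}\boldsymbol{\Pi}/n$ is a rank-$K$ matrix whose range is asymptotically not contained in the $Z$-spline span, since $\boldsymbol{f}$ is uncorrelated with $Z$ and $\hat{\boldsymbol{F}}$ is consistent up to rotation (Lemma B.1). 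These bounds give $n^{-1}\|\boldsymbol{\Pi}(\hat{\boldsymbol{\xi}}-\boldsymbol{\xi}_{0})\|_{2}^{2}\le n^{-1}\lambda_{\max}(\boldsymbol{\Pi}^{\top}\boldsymbol{\Pi})\|\hat{\boldsymbol{\xi}}-\boldsymbol{\xi}_{0}\|_{2}^{2}\lesssim n^{-2}\|\boldsymbol{A}(\hat{\boldsymbol{\xi}}-\boldsymbol{\xi}_{0})\|_{2}^{2}$, reducing everything to bounding $n^{-2}$ times the squared norm of each of the three terms on the right.

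I would treat the three terms separately. (i) Spline bias: $n^{-2}\|\boldsymbol{\Pi}^{\top}(\boldsymbol{I}_{n}-\hat{\boldsymbol{P}})\boldsymbol{R}\|_{2}^{2}\le n^{-2}\lambda_{\max}(\boldsymbol{\Pi}^{\top}\boldsymbol{\Pi})\|\boldsymbol{R}\|_{2}^{2}\lesssim n^{-1}\|\boldsymbol{R}\|_{2}^{2}\lesssim M_{n}^{-2r}$. (ii) Noise: splitting $(\boldsymbol{I}_{n}-\hat{\boldsymbol{P}})\boldsymbol{\varepsilon}=\boldsymbol{\varepsilon}-\hat{\boldsymbol{P}}\boldsymbol{\varepsilon}$, a conditional second-moment computation gives $\mathbb{E}\|\boldsymbol{\Pi}^{\top}\boldsymbol{\varepsilon}\|_{2}^{2}=\sigma_{\varepsilon}^{2}\tr(\boldsymbol{\Pi}^{\top}\boldsymbol{\Pi})\asymp nM_{n}$, so $n^{-2}\|\boldsymbol{\Pi}^{\top}\boldsymbol{\varepsilon}\|_{2}^{2}=O_{\mathbb{P}}(M_{n}/n)=O_{\mathbb{P}}(M_{n}^{-2r})$ under $M_{n}\asymp n^{1/(2r+1)}$, while $n^{-2}\|\boldsymbol{\Pi}^{\top}\hat{\boldsymbol{P}}\boldsymbol{\varepsilon}\|_{2}^{2}\le n^{-1}\|\hat{\boldsymbol{P}}\boldsymbol{\varepsilon}\|_{2}^{2}=n^{-2}\|\hat{\boldsymbol{F}}^{\top}\boldsymbol{\varepsilon}\|_{2}^{2}$, which I would bound as $O_{\mathbb{P}}\{(\log p)(\log n)^{2}/n\}$ using the rotation-consistency of $\hat{\boldsymbol{F}}$ (Lemma B.1) and sub-Gaussian maximal inequalities for $\boldsymbol{\varepsilon}$ and $\boldsymbol{X}$. (iii) Lasso error: $n^{-2}\|\boldsymbol{\Pi}^{\top}\hat{\boldsymbol{U}}(\boldsymbol{\beta}_{0}-\hat{\boldsymbol{\beta}})\|_{2}^{2}\le\bigl(n^{-1}\max_{j\in[p]}\|\boldsymbol{\Pi}^{\top}\hat{\boldsymbol{U}}_{j}\|_{2}\bigr)^{2}\|\hat{\boldsymbol{\beta}}-\boldsymbol{\beta}_{0}\|_{1}^{2}$, where Theorem \ref{consistency property} gives $\|\hat{\boldsymbol{\beta}}-\boldsymbol{\beta}_{0}\|_{1}=O_{\mathbb{P}}(\lambda s)=O_{\mathbb{P}}(s\sqrt{(\log p)/n})$, and one shows $n^{-1}\max_{j}\|\boldsymbol{\Pi}^{\top}\hat{\boldsymbol{U}}_{j}\|_{2}=O_{\mathbb{P}}(1)$ by decomposing $\hat{\boldsymbol{U}}_{j}=(\boldsymbol{I}_{n}-\hat{\boldsymbol{P}})(\boldsymbol{F}\boldsymbol{b}_{j}+\boldsymbol{U}_{j})$ and $U_{ij}=\mathbb{E}(U_{ij}\mid Z_{i})+U^{*}_{ij}$: the conditional-mean part gives the deterministic leading order $O(n)$ because each $\mathbb{E}(U_{j}\mid Z)\in\mathcal{H}_{r}$ is bounded, whereas the $U^{*}$-fluctuation and the factor residual $(\boldsymbol{I}_{n}-\hat{\boldsymbol{P}})\boldsymbol{F}$ are of smaller order uniformly in $j$. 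Combining yields the $s^{2}(\log p)^{2}/n$ contribution, and summing the three pieces completes the bound.

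The hard part will be step (iii). In a standard partially linear model the idiosyncratic design is orthogonal to the spline space, so the analogous cross term vanishes; here $\hat{\boldsymbol{U}}=(\boldsymbol{I}_{n}-\hat{\boldsymbol{P}})\boldsymbol{X}$ retains both the conditional means $\mathbb{E}(\boldsymbol{u}\mid Z)$ — which lie essentially in the spline span, forcing $\|\boldsymbol{\Pi}^{\top}\hat{\boldsymbol{U}}_{j}\|_{2}$ to be of the full order $n$ rather than $\sqrt{n}$ — and the factor-estimation residual; both must be isolated, bounded uniformly over the $p$ coordinates, and coupled with the $\ell_{1}$-rate of $\hat{\boldsymbol{\beta}}$ so that their product stays $O_{\mathbb{P}}(s^{2}(\log p)^{2}/n)$. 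A secondary technical point is the lower bound $\lambda_{\min}(\boldsymbol{A})\gtrsim n$: because $\hat{\boldsymbol{P}}$ is a data-dependent rank-$K$ projection, one has to rule out that a non-trivial spline function of $Z$ is asymptotically collinear with the estimated factor space, which is where the lack of correlation between $\boldsymbol{f}$ and $Z$ together with the consistency of $\hat{\boldsymbol{F}}$ enters. Everything else — the second-moment bounds in (i)--(ii), the maximal inequalities in (iii), and the propagation of the factor-estimation error — reuses Lemma B.1 and Theorem \ref{consistency property} and is routine.
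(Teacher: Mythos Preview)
Your decomposition via the first-order condition for $\hat{\boldsymbol{\xi}}$ is correct and leads to the stated rate, but it proceeds by a different route from the paper. The paper does not isolate $\hat{\boldsymbol{\xi}}$ through its own score equation; instead it recycles the reparametrization $\boldsymbol{\eta}=\boldsymbol{\xi}+(\boldsymbol{\Pi}^{\top}\boldsymbol{\Pi})^{-1}\boldsymbol{\Pi}^{\top}\hat{\boldsymbol{U}}\boldsymbol{\beta}$ already used in the proof of Theorem~\ref{consistency property}. Step~1 of that proof yields $\|\hat{\boldsymbol{\eta}}-\boldsymbol{\eta}_{0}\|_{2}=O_{\mathbb{P}}\{\lambda\sqrt{s}+\sqrt{M_{n}/n}+\sqrt{\log p}\,(\log n)/\sqrt{n}\}$, and one then writes $\boldsymbol{\Pi}(\hat{\boldsymbol{\xi}}-\boldsymbol{\xi}_{0})=\boldsymbol{\Pi}(\hat{\boldsymbol{\eta}}-\boldsymbol{\eta}_{0})-\boldsymbol{P}_{\boldsymbol{\Pi}}\hat{\boldsymbol{U}}(\hat{\boldsymbol{\beta}}-\boldsymbol{\beta}_{0})$; the first piece inherits the $\boldsymbol{\eta}$-rate and the second is controlled via $\max_{j}\|\hat{\boldsymbol{U}}_{j}\|_{2}\,\|\hat{\boldsymbol{\beta}}-\boldsymbol{\beta}_{0}\|_{1}$. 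This avoids having to prove $\lambda_{\min}(\boldsymbol{A})\gtrsim n$ as a separate lemma: that lower bound is implicit in the RSC verification inside Theorem~\ref{consistency property}, where one shows $n^{-1}\boldsymbol{v}^{\top}\boldsymbol{\Pi}^{\top}\hat{\boldsymbol{P}}\boldsymbol{\Pi}\boldsymbol{v}=O_{\mathbb{P}}\{M_{n}^{2}(\log p)(\log n)/n\}\|\boldsymbol{v}\|_{2}^{2}$ through a sub-Gaussian maximal bound on $\max_{k',k''}\bigl|\sum_{i}F_{ik'}\Pi_{ik''}\bigr|$, rather than an abstract ``range not contained'' argument. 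Your direct route is valid and more transparent about where each of the three rate terms originates; the paper's route is shorter because the heavy lifting is absorbed into Theorem~\ref{consistency property}. Two minor remarks: your informal justification of $\lambda_{\min}(\boldsymbol{A})\gtrsim n$ would need the quantitative bound above to be rigorous; and in (iii) the simple Cauchy--Schwarz bound $\|\boldsymbol{\Pi}^{\top}\hat{\boldsymbol{U}}_{j}\|_{2}\le\|\boldsymbol{\Pi}\|_{2}\max_{j}\|\hat{\boldsymbol{U}}_{j}\|_{2}=O_{\mathbb{P}}(n)$ already gives $s^{2}(\log p)/n$, so the finer decomposition of $\hat{\boldsymbol{U}}_{j}$ you anticipate as the ``hard part'' is not actually needed.
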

Proposition \ref{g_hat error bound proposition} provides the error bound for the nonlinear estimator. This result demonstrates that the estimation error of the latent factors does not affect the order of the nonlinear component's estimation error, provided that $(\log p)^3 \ll M_n$.
				%
		\subsection{Asymptotic properties of  factor-adjusted projection debiased estimator}		

To establish an accuracy bound for $\hat{\boldsymbol{\Theta}}$ as an approximation of the inverse of the pseudo Hessian matrix $\tilde{\boldsymbol{\Sigma}}$, we first introduce a necessary assumption regarding the sparsity of $\boldsymbol{\Theta}$.

	\begin{assumption}\label{assumption on the sparsity of the inverse of SIgmau}
					There exists some positive constant $M_{\Sigma}$, such that $\left\|\boldsymbol{\Theta}\right\|_{\infty} \leq M_{\Sigma}$. Moreover,  $\boldsymbol{\Theta}=\left({\boldsymbol{\Gamma}}_{1},\ldots,{\boldsymbol{\Gamma}}_{p}\right)^{\top}=({\Gamma}_{ij})_{1 \leq i,j \leq p}$ is  row-wise sparse, i.e., $\max_{i \in [p]}\sum_{j=1}^{p}|{\Gamma}_{ij}|^{q} \leq c_{n,p}$, for some $0 \leq q < 1$, where $c_{n,p}$ is positive and bounded away from zero and allowed to increase as $n$ and $p$ grow.
				\end{assumption}
				
				Assumption \ref{assumption on the sparsity of the inverse of SIgmau} 	requires that  $\boldsymbol{\Theta}$  is sparse with respect to both its $\ell_{\infty}$-norm and its matrix row space.  Similar assumptions regarding precision matrix estimation and more general inverse Hessian matrix estimation have been discussed by \cite{van2014asymptotically}, \cite{cai2016estimating}, and \cite{ning2017general}. The estimation error bounds for $\hat{\boldsymbol{\Theta}}$ and the upper bound of $\|\hat{\boldsymbol{\Theta}} - \boldsymbol{\Theta}\|_{\infty}$ are provided in Proposition B.1 and Lemma B.9 in the Supplementary Material.
                These results are essential for establishing the theoretical foundations.			
				In the following, we give the Gaussian approximation result for the test statistic.

				%
				
				\begin{theorem} \label{Gaussianapproximationtheoremresult}
					Suppose that the conditions in Theorem \ref{consistency property} and Assumption \ref{assumption on the sparsity of the inverse of SIgmau} are satisfied.
                    Additionally, we assume that  
					$s=o[{{\sqrt{n}}/{(\log p)^{2}}}+\sqrt{n/\{M_{n}(\log p)^{3}\}}]$ and $(\log p)^{(12-4q)/(3-3q)}=o(n)$, with $0 \leq q <1$ being the constant in Assumption \ref{assumption on the sparsity of the inverse of SIgmau}. Then, under the null hypothesis, we have  
					\begin{align}
						\sup\limits_{ x>0}\left| \Pr\left(\sqrt{n}\|\tilde{\boldsymbol{\beta}}-\boldsymbol{\beta}_0\|_{\infty}\leq x\right)-\Pr\left(\|\boldsymbol{V}\|_{\infty} \leq x\right)\right| \rightarrow 0.
					\end{align}
				\end{theorem}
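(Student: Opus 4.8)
The plan is to establish a Bahadur-type linear representation for $\tilde{\boldsymbol{\beta}}$ under $H_0$, isolate the leading Gaussian-type term $\boldsymbol{W}_n:=n^{-1/2}\boldsymbol{\Theta}\boldsymbol{U}^{*\top}\boldsymbol{\varepsilon}$, and then invoke a high-dimensional central limit theorem together with Gaussian anti-concentration. \emph{Step 1 (linearization).} Under $H_0$ we have $\boldsymbol{\beta}_0=\boldsymbol{0}$, hence $\boldsymbol{\varphi}_0=\boldsymbol{B}^\top\boldsymbol{\beta}_0=\boldsymbol{0}$ and $\boldsymbol{Y}=g_0(\boldsymbol{Z})+\boldsymbol{\varepsilon}$. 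Using $\widehat{\boldsymbol{F}}^\top\widehat{\boldsymbol{U}}=\boldsymbol{0}$ and $\widehat{\boldsymbol{P}}\widehat{\boldsymbol{U}}=\boldsymbol{0}$, the residual in \eqref{debiased estimator} rewrites as $\boldsymbol{Y}-\widehat{\boldsymbol{U}}\widehat{\boldsymbol{\beta}}-\widehat{\boldsymbol{F}}\tilde{\boldsymbol{\varphi}}-\boldsymbol{\Pi}\widehat{\boldsymbol{\xi}}=-\widehat{\boldsymbol{U}}\widehat{\boldsymbol{\beta}}+(\boldsymbol{I}_n-\widehat{\boldsymbol{P}})\boldsymbol{\varepsilon}+(\boldsymbol{I}_n-\widehat{\boldsymbol{P}})\{g_0(\boldsymbol{Z})-\boldsymbol{\Pi}\widehat{\boldsymbol{\xi}}\}$. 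Writing $\boldsymbol{P}_{\boldsymbol{\Pi}}=\boldsymbol{\Pi}(\boldsymbol{\Pi}^\top\boldsymbol{\Pi})^{-1}\boldsymbol{\Pi}^\top$ so that $\tilde{\boldsymbol{U}}=(\boldsymbol{I}_n-\boldsymbol{P}_{\boldsymbol{\Pi}})\widehat{\boldsymbol{U}}$, idempotency yields the key identities $n^{-1}\tilde{\boldsymbol{U}}^\top\widehat{\boldsymbol{U}}=n^{-1}\tilde{\boldsymbol{U}}^\top\tilde{\boldsymbol{U}}=\tilde{\boldsymbol{\Sigma}}$ and $\tilde{\boldsymbol{U}}^\top\boldsymbol{\Pi}=\boldsymbol{0}$. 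Substituting and collecting terms gives
\[
\sqrt{n}(\tilde{\boldsymbol{\beta}}-\boldsymbol{\beta}_0)=\boldsymbol{W}_n+\boldsymbol{R}_{n,1}+\boldsymbol{R}_{n,2}+\boldsymbol{R}_{n,3}+\boldsymbol{R}_{n,4},
\]
where $\boldsymbol{R}_{n,1}=\sqrt{n}(\boldsymbol{I}_p-\widehat{\boldsymbol{\Theta}}\tilde{\boldsymbol{\Sigma}})\widehat{\boldsymbol{\beta}}$ is the regularization bias, $\boldsymbol{R}_{n,2}=n^{-1/2}(\widehat{\boldsymbol{\Theta}}-\boldsymbol{\Theta})\tilde{\boldsymbol{U}}^\top(\boldsymbol{I}_n-\widehat{\boldsymbol{P}})\boldsymbol{\varepsilon}$ is the CLIME estimation error, $\boldsymbol{R}_{n,3}=n^{-1/2}\boldsymbol{\Theta}\{\tilde{\boldsymbol{U}}^\top(\boldsymbol{I}_n-\widehat{\boldsymbol{P}})-\boldsymbol{U}^{*\top}\}\boldsymbol{\varepsilon}$ captures the latent-factor estimation error and the $\tilde{\boldsymbol{U}}\!\to\!\boldsymbol{U}^*$ replacement error, and $\boldsymbol{R}_{n,4}=n^{-1/2}\widehat{\boldsymbol{\Theta}}\tilde{\boldsymbol{U}}^\top(\boldsymbol{I}_n-\widehat{\boldsymbol{P}})\{g_0(\boldsymbol{Z})-\boldsymbol{\Pi}\widehat{\boldsymbol{\xi}}\}$ captures the B-spline approximation and estimation error for $g_0$.

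\emph{Step 2 (remainder control).} For $\boldsymbol{R}_{n,1}$, the feasibility constraint $\|\widehat{\boldsymbol{\Theta}}\tilde{\boldsymbol{\Sigma}}-\boldsymbol{I}_p\|_{\max}\le\delta_n$ from \eqref{Theta hat estimation} and $\|\widehat{\boldsymbol{\beta}}\|_1=O_{\mathbb{P}}(s\sqrt{(\log p)/n})$ from Theorem~\ref{consistency property} give $\|\boldsymbol{R}_{n,1}\|_\infty\le\sqrt{n}\,\delta_n\|\widehat{\boldsymbol{\beta}}\|_1=O_{\mathbb{P}}(s\sqrt{\log p}\,\delta_n)$, which is $o_{\mathbb{P}}\{(\log p)^{-1/2}\}$ once $\delta_n\asymp(\log p)/\sqrt{n}+\sqrt{M_n(\log p)/n}$ and $s=o[\sqrt{n/\{M_n(\log p)^3\}}]$. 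For $\boldsymbol{R}_{n,4}$, because $\tilde{\boldsymbol{U}}^\top\boldsymbol{\Pi}=\boldsymbol{0}$, only the spline bias $g_0(\boldsymbol{Z})-\boldsymbol{\Pi}\boldsymbol{\xi}_0$ (of order $M_n^{-r}$ by \eqref{unparameter error}) and the $\widehat{\boldsymbol{P}}$-interaction term $\tilde{\boldsymbol{U}}^\top\widehat{\boldsymbol{P}}\boldsymbol{\Pi}(\boldsymbol{\xi}_0-\widehat{\boldsymbol{\xi}})$ survive; these are handled via $\|\widehat{\boldsymbol{\Theta}}\|_\infty=O_{\mathbb{P}}(1)$ (Proposition~B.1), Proposition~\ref{g_hat error bound proposition}, a sub-exponential maximal inequality for averages of $\boldsymbol{u}^*$ against the deterministic error vector, and Lemma~B.1. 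For $\boldsymbol{R}_{n,2}$ we bound $\|\widehat{\boldsymbol{\Theta}}-\boldsymbol{\Theta}\|_\infty\cdot\|n^{-1}\tilde{\boldsymbol{U}}^\top(\boldsymbol{I}_n-\widehat{\boldsymbol{P}})\boldsymbol{\varepsilon}\|_\infty$, using Lemma~B.9 for the first factor and a sub-exponential maximal inequality plus Lemma~B.1 for the second. For $\boldsymbol{R}_{n,3}$ we expand $\tilde{\boldsymbol{U}}^\top(\boldsymbol{I}_n-\widehat{\boldsymbol{P}})-\boldsymbol{U}^{*\top}$ into the PCA error $\widehat{\boldsymbol{U}}-\boldsymbol{U}$, the spline error for $\mbE(\boldsymbol{U}|\boldsymbol{Z})$, and the $\widehat{\boldsymbol{P}}$-correction $\tilde{\boldsymbol{U}}^\top\widehat{\boldsymbol{P}}$, each contracted with $\boldsymbol{\varepsilon}$ to produce a maximum over $p$ coordinates of a mean-zero sub-exponential average times an $o(1)$ deterministic rate supplied by Lemma~B.1. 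Summing, under the conditions $s=o[\sqrt{n}/(\log p)^2+\sqrt{n/\{M_n(\log p)^3\}}]$ and $(\log p)^{(12-4q)/(3-3q)}=o(n)$, one gets $\|\boldsymbol{R}_n\|_\infty:=\|\boldsymbol{R}_{n,1}+\boldsymbol{R}_{n,2}+\boldsymbol{R}_{n,3}+\boldsymbol{R}_{n,4}\|_\infty=o_{\mathbb{P}}\{(\log p)^{-1/2}\}$.

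\emph{Step 3 (Gaussian approximation and glue).} The summands of $\boldsymbol{W}_n=n^{-1/2}\sum_{i=1}^n\boldsymbol{\Theta}\boldsymbol{u}_i^*\varepsilon_i$ are i.i.d.\ with mean zero (since $\mbE\boldsymbol{u}_i^*=\boldsymbol{0}$ and $\varepsilon_i$ is independent of $(\boldsymbol{u}_i,Z_i)$) and covariance $\sigma_\varepsilon^2\boldsymbol{\Theta}\boldsymbol{\Sigma}_u^*\boldsymbol{\Theta}=\sigma_\varepsilon^2\boldsymbol{\Theta}$ because $\boldsymbol{\Theta}=(\boldsymbol{\Sigma}_u^*)^{-1}$; hence $\Cov(\boldsymbol{W}_n)=\Cov(\boldsymbol{V})$. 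Each coordinate $\boldsymbol{\Gamma}_j^\top\boldsymbol{u}_i^*\varepsilon_i$ is sub-exponential with uniformly bounded Orlicz norm by Assumptions~\ref{factorassumption}, \ref{consistency property assumption}(i) and \ref{assumption on the sparsity of the inverse of SIgmau} ($\|\boldsymbol{\Theta}\|_\infty\le M_\Sigma$), and $\min_j(\boldsymbol{\Theta})_{jj}$ is bounded away from $0$ and $\infty$ since $\lambda_{\min}(\boldsymbol{\Sigma}_u^*)$ is bounded below and $\|\boldsymbol{\Sigma}_u\|_2$ above. The Chernozhukov--Chetverikov--Kato high-dimensional CLT for maxima of sums of independent vectors then gives $\sup_{x\ge0}|\Pr(\|\boldsymbol{W}_n\|_\infty\le x)-\Pr(\|\boldsymbol{V}\|_\infty\le x)|\to0$, the polylogarithmic rate condition in $p$ being satisfied by $(\log p)^{12}=O(n)$ from Theorem~\ref{consistency property}. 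Finally, for $x>0$ and deterministic $\eta_n\to0$ with $\eta_n\sqrt{\log p}\to0$ chosen to dominate the $o_{\mathbb{P}}\{(\log p)^{-1/2}\}$ rate of $\|\boldsymbol{R}_n\|_\infty$,
\[
\Pr(\|\boldsymbol{W}_n+\boldsymbol{R}_n\|_\infty\le x)\le\Pr(\|\boldsymbol{W}_n\|_\infty\le x+\eta_n)+\Pr(\|\boldsymbol{R}_n\|_\infty>\eta_n),
\]
and symmetrically from below; applying the CLT and the Gaussian anti-concentration bound $\sup_x\Pr(x-\eta_n<\|\boldsymbol{V}\|_\infty\le x+\eta_n)\lesssim\eta_n\sqrt{\log p}\to0$ (valid since $\min_j\Var(V_j)>0$) uniformly in $x$ yields the claimed convergence.

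\emph{Main obstacle.} The crux is Step 2: showing that the combined B-spline error for $g_0$ and $\mbE(\boldsymbol{u}\mid Z)$, the PCA factor-estimation error in $\widehat{\boldsymbol{F}},\widehat{\boldsymbol{U}},\widehat{\boldsymbol{P}}$, and the CLIME error $\|\widehat{\boldsymbol{\Theta}}-\boldsymbol{\Theta}\|$ all propagate to the $\ell_\infty$ error of $\tilde{\boldsymbol{\beta}}$ at the sharpened rate $o_{\mathbb{P}}\{(\log p)^{-1/2}\}$ --- a factor $\sqrt{\log p}$ stronger than what suffices for the $\ell_1$/$\ell_2$ bounds of Theorem~\ref{consistency property} --- which is exactly why the strengthened sparsity bound $s=o[\sqrt{n}/(\log p)^2+\sqrt{n/\{M_n(\log p)^3\}}]$ and the stronger $\log p$-growth condition appear here. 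The detailed bookkeeping of the many cross terms, each requiring a maximal inequality for sub-exponential averages combined with the factor-consistency rates of Lemma~B.1 and the precision-matrix rates of Proposition~B.1 and Lemma~B.9, is where the real work lies.
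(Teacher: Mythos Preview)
Your proposal is correct and follows essentially the same route as the paper: linearize $\sqrt{n}(\tilde{\boldsymbol{\beta}}-\boldsymbol{\beta}_0)$ around the leading term $n^{-1/2}\boldsymbol{\Theta}\boldsymbol{U}^{*\top}\boldsymbol{\varepsilon}$, show the total remainder is $o_{\mathbb{P}}\{(\log p)^{-1/2}\}$ via the CLIME feasibility bound, Lemma~B.9, the factor-consistency rates of Lemma~B.1, and sub-exponential maximal inequalities, then glue via the Chernozhukov--Chetverikov--Kato CLT and Gaussian anti-concentration. The only cosmetic difference is that the paper carries out the linearization for general $\boldsymbol{\beta}_0$ using the $\boldsymbol{\eta}$-reparameterization (producing six remainder pieces $\text{I}_{\text{B}1}$--$\text{I}_{\text{B}6}$, including a separate $\tilde{\boldsymbol{U}}^\top\widehat{\boldsymbol{P}}\tilde{\boldsymbol{U}}(\hat{\boldsymbol{\beta}}-\boldsymbol{\beta}_0)$ term), whereas you exploit $H_0$ directly and the identity $\tilde{\boldsymbol{U}}^\top\widehat{\boldsymbol{U}}=n\tilde{\boldsymbol{\Sigma}}$ to collapse to four pieces---a harmless regrouping.
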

				
				Theorem \ref{Gaussianapproximationtheoremresult} indicates that our test statistic can be approximated by the maximum of a high-dimensional mean-zero Gaussian vector under mild conditions.  Based on this,  we can reject the null hypothesis $H_{0}$ at the significant level $\alpha$ if and only if $\sqrt{n}\|\tilde{\boldsymbol{\beta}}-\boldsymbol{\beta}_0\|_{\infty} > c_{1-\alpha}$, where $c_{1-\alpha}$ is the $(1-\alpha)$-th quantile of the distribution of $\|\boldsymbol{V}\|_{\infty}$.

We next demonstrate that the estimator $\hat{\sigma}_{\varepsilon}^{2}$, defined in \eqref{sigma-epsilon hat}, is consistent to the true variance ${\sigma}_{\varepsilon}^{2}$.
				\begin{proposition} \label{sigma estimation consistency lemma}
					Suppose that the conditions in Theorem \ref{consistency property} hold
					and  $s=o(n^{1/4}/\sqrt{\log p})$. Then, under the null hypothesis,  we have 
					\begin{align}
						\hat{\sigma}_{\varepsilon}^{2}-\sigma_{\varepsilon}^{2}=o_{\mathbb{P}}\left(\frac{1}{\log p}\right). \notag
					\end{align}
				\end{proposition}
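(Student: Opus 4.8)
The plan is to write $\hat{\sigma}_{\varepsilon}^{2}$ as a residual sum of squares, isolate the ``clean'' part $n^{-1}\sum_{i}\varepsilon_{i}^{2}$, and show that every remaining piece is $o_{\mathbb{P}}\{(\log p)^{-1}\}$. Put $\boldsymbol{r}=\boldsymbol{Y}-\hat{\boldsymbol{U}}\hat{\boldsymbol{\beta}}-\hat{\boldsymbol{F}}\tilde{\boldsymbol{\varphi}}-\boldsymbol{\Pi}\hat{\boldsymbol{\xi}}$, so $\hat{\sigma}_{\varepsilon}^{2}=n^{-1}\|\boldsymbol{r}\|_{2}^{2}$. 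Substituting $\tilde{\boldsymbol{\varphi}}=n^{-1}\hat{\boldsymbol{F}}^{\top}(\boldsymbol{Y}-\boldsymbol{\Pi}\hat{\boldsymbol{\xi}})$ and using $\hat{\boldsymbol{F}}^{\top}\hat{\boldsymbol{U}}=\boldsymbol{0}$ gives $\boldsymbol{r}=(\boldsymbol{I}_{n}-\hat{\boldsymbol{P}})(\boldsymbol{Y}-\boldsymbol{\Pi}\hat{\boldsymbol{\xi}})-\hat{\boldsymbol{U}}\hat{\boldsymbol{\beta}}$, with $\hat{\boldsymbol{P}}=n^{-1}\hat{\boldsymbol{F}}\hat{\boldsymbol{F}}^{\top}$. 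Under $H_{0}$, $\boldsymbol{\beta}_{0}=\boldsymbol{0}$ forces $\boldsymbol{\varphi}_{0}=\boldsymbol{B}^{\top}\boldsymbol{\beta}_{0}=\boldsymbol{0}$, hence $\boldsymbol{Y}=g_{0}(\boldsymbol{Z})+\boldsymbol{\varepsilon}$ and therefore
\[
\boldsymbol{r}=\boldsymbol{\varepsilon}+\boldsymbol{w},\qquad \boldsymbol{w}:=(\boldsymbol{I}_{n}-\hat{\boldsymbol{P}})\{g_{0}(\boldsymbol{Z})-\boldsymbol{\Pi}\hat{\boldsymbol{\xi}}\}-\hat{\boldsymbol{P}}\boldsymbol{\varepsilon}-\hat{\boldsymbol{U}}\hat{\boldsymbol{\beta}}.
\]
Consequently $\hat{\sigma}_{\varepsilon}^{2}-\sigma_{\varepsilon}^{2}=(n^{-1}\|\boldsymbol{\varepsilon}\|_{2}^{2}-\sigma_{\varepsilon}^{2})+2n^{-1}\boldsymbol{\varepsilon}^{\top}\boldsymbol{w}+n^{-1}\|\boldsymbol{w}\|_{2}^{2}$, and it is enough to bound the three summands separately.

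Since $\boldsymbol{I}_{n}-\hat{\boldsymbol{P}}$ is an orthogonal projection, $\|\boldsymbol{a}+\boldsymbol{b}+\boldsymbol{c}\|_{2}^{2}\le 3(\|\boldsymbol{a}\|_{2}^{2}+\|\boldsymbol{b}\|_{2}^{2}+\|\boldsymbol{c}\|_{2}^{2})$ gives
\[
n^{-1}\|\boldsymbol{w}\|_{2}^{2}\le \frac{3}{n}\sum_{i=1}^{n}\{g_{0}(Z_{i})-\hat{g}(Z_{i})\}^{2}+\frac{3}{n}\|\hat{\boldsymbol{P}}\boldsymbol{\varepsilon}\|_{2}^{2}+\frac{3}{n}\|\hat{\boldsymbol{U}}\hat{\boldsymbol{\beta}}\|_{2}^{2}.
\]
By Proposition \ref{g_hat error bound proposition} the first term is $O_{\mathbb{P}}\{M_{n}^{-2r}+(\log p)(\log n)^{2}/n+s^{2}(\log p)^{2}/n\}$. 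For the second, $\hat{\boldsymbol{F}}^{\top}\hat{\boldsymbol{F}}=n\boldsymbol{I}_{K}$ yields $n^{-1}\|\hat{\boldsymbol{P}}\boldsymbol{\varepsilon}\|_{2}^{2}=\|n^{-1}\hat{\boldsymbol{F}}^{\top}\boldsymbol{\varepsilon}\|_{2}^{2}$; since $\boldsymbol{\varepsilon}$ is independent of $\boldsymbol{X}$ and hence of $\hat{\boldsymbol{F}}$, and every column of $\hat{\boldsymbol{F}}$ has Euclidean norm $\sqrt{n}$, a conditional sub-Gaussian bound gives $\|n^{-1}\hat{\boldsymbol{F}}^{\top}\boldsymbol{\varepsilon}\|_{2}^{2}=O_{\mathbb{P}}(K/n)=O_{\mathbb{P}}(n^{-1})$. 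The third term is the Lasso prediction error: the proof of Theorem \ref{consistency property} controls $n^{-1}\|\hat{\boldsymbol{U}}(\hat{\boldsymbol{\beta}}-\boldsymbol{\beta}_{0})\|_{2}^{2}$ by $O_{\mathbb{P}}(\lambda^{2}s)=O_{\mathbb{P}}(s\log p/n)$, up to terms of order no larger than the first summand, and under $H_{0}$ we have $\boldsymbol{\beta}_{0}=\boldsymbol{0}$. With $\lambda\asymp\sqrt{(\log p)/n}$ and $M_{n}\asymp n^{1/(2r+1)}$, so that $M_{n}^{-2r}\le n^{-3/4}$ because $r\ge 1.5$, multiplying each summand by $(\log p)^{2}$ and using $(\log p)^{12}=O(n)$ (whence $(\log p)^{3}(\log n)^{2}=o(n)$) together with $s=o(n^{1/4}/\sqrt{\log p})$ (whence $s^{2}(\log p)^{4}=o(n)$ and $s(\log p)^{3}=o(n)$) shows $n^{-1}\|\boldsymbol{w}\|_{2}^{2}=o_{\mathbb{P}}\{(\log p)^{-2}\}$.

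It remains to handle the other two summands. Because $\|\varepsilon\|_{\psi_{2}}\le c_{1}$, the variables $\varepsilon_{i}^{2}-\sigma_{\varepsilon}^{2}$ are i.i.d.\ centered sub-exponential, so Bernstein's inequality gives $n^{-1}\|\boldsymbol{\varepsilon}\|_{2}^{2}-\sigma_{\varepsilon}^{2}=O_{\mathbb{P}}(n^{-1/2})$, which is $o_{\mathbb{P}}\{(\log p)^{-1}\}$ since $(\log p)^{12}=O(n)$ implies $\log p=o(\sqrt{n})$. For the cross term, Cauchy--Schwarz gives
\[
|2n^{-1}\boldsymbol{\varepsilon}^{\top}\boldsymbol{w}|\le 2\big(n^{-1}\|\boldsymbol{\varepsilon}\|_{2}^{2}\big)^{1/2}\big(n^{-1}\|\boldsymbol{w}\|_{2}^{2}\big)^{1/2}=O_{\mathbb{P}}(1)\cdot o_{\mathbb{P}}\{(\log p)^{-1}\},
\]
using the $o_{\mathbb{P}}\{(\log p)^{-2}\}$ bound on $n^{-1}\|\boldsymbol{w}\|_{2}^{2}$. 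Summing the three bounds yields $\hat{\sigma}_{\varepsilon}^{2}-\sigma_{\varepsilon}^{2}=o_{\mathbb{P}}\{(\log p)^{-1}\}$, which is the claim.

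The main obstacle is quantitative rather than conceptual: plain consistency of $\hat{\sigma}_{\varepsilon}^{2}$ only needs $n^{-1}\|\boldsymbol{w}\|_{2}^{2}=o_{\mathbb{P}}(1)$, but the target rate $o_{\mathbb{P}}\{(\log p)^{-1}\}$ forces the sharper bound $n^{-1}\|\boldsymbol{w}\|_{2}^{2}=o_{\mathbb{P}}\{(\log p)^{-2}\}$ because of the square-root loss incurred when bounding the cross term $2n^{-1}\boldsymbol{\varepsilon}^{\top}\boldsymbol{w}$ by Cauchy--Schwarz. The delicate ingredient inside $n^{-1}\|\boldsymbol{w}\|_{2}^{2}$ is the nonparametric-fit term, which through Proposition \ref{g_hat error bound proposition} carries the factor $s^{2}(\log p)^{2}/n$; this survives multiplication by $(\log p)^{2}$ only when $s^{2}(\log p)^{4}=o(n)$, which is exactly why the additional condition $s=o(n^{1/4}/\sqrt{\log p})$ is imposed on top of the assumptions of Theorem \ref{consistency property}.
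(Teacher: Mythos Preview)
Your proof is correct and arrives at the same conclusion, but the route differs from the paper's in structure. You exploit the null hypothesis immediately to write $\boldsymbol{Y}=g_{0}(\boldsymbol{Z})+\boldsymbol{\varepsilon}$, obtaining the compact decomposition $\boldsymbol{r}=\boldsymbol{\varepsilon}+\boldsymbol{w}$ with only three pieces in $\boldsymbol{w}$, and then invoke Proposition~\ref{g_hat error bound proposition} as a black box for the nonparametric term. The paper instead expands the residual keeping the general $\boldsymbol{\beta}_{0}$, producing nine terms $\text{I}_{\text{D},1},\ldots,\text{I}_{\text{D},9}$ (four squares, the centered $\varepsilon^{2}$ sum, and four explicit cross terms with $\boldsymbol{\varepsilon}$), each bounded individually via the $\ell_{1}$-error of Theorem~\ref{consistency property} and factor-estimation rates. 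Your use of Cauchy--Schwarz on the single cross term $2n^{-1}\boldsymbol{\varepsilon}^{\top}\boldsymbol{w}$ is what forces the intermediate target $n^{-1}\|\boldsymbol{w}\|_{2}^{2}=o_{\mathbb{P}}\{(\log p)^{-2}\}$; the paper avoids this by pairing each cross term with a direct $\|\cdot\|_{2}$ bound, which in principle can tolerate a slightly weaker sparsity restriction. Two small remarks: the prediction-error bound $n^{-1}\|\hat{\boldsymbol{U}}\hat{\boldsymbol{\beta}}\|_{2}^{2}=O_{\mathbb{P}}(s\log p/n)$ is not stated as a separate result in the paper, though it does follow from the RSC step in the proof of Theorem~\ref{consistency property} (and the cruder bound $O_{\mathbb{P}}(s^{2}\log p/n)$ via $\|\hat{\boldsymbol{\beta}}\|_{1}$ would already suffice here); and your closing sentence slightly overstates the sharpness of the sparsity condition, since your own argument would go through under the weaker $s=o\{\sqrt{n}/(\log p)^{2}\}$.
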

				\begin{remark}
					Proposition \ref{sigma estimation consistency lemma} establishes the consistency of the variance estimator $\hat{\sigma}_{\varepsilon}^{2}$.
					Actually, the estimator of ${\sigma}_{\varepsilon}^{2}$ in high-dimensional linear regression has been extensively studied in the literature. For instance, \cite{fan2012variance} proposed the refitted cross-validation method to develop a consistent estimation that quantifies the uncertainty of 
					$\hat{\sigma}_{\varepsilon}$ 	in ultra-high-dimensional settings.
					Moreover, \cite{sun2012scaled} introduced the scaled-Lasso method, while \cite{yu2019estimating} developed the organic Lasso  method for estimating ${\sigma}_{\varepsilon}$.

				\end{remark}

				
				We  give the theoretical results related to the validity of the  bootstrap procedure in the following Theorem.
				\begin{theorem} \label{the validity of the  bootstrap procedure}
					Suppose that  the conditions in  Theorem \ref{Gaussianapproximationtheoremresult} and Proposition \ref{sigma estimation consistency lemma} hold. 
					Then, under the null hypothesis, we have 
					\begin{align}
						\sup\limits_{x>0}\left|\Pr\left(\frac{{\sqrt{n}\|\tilde{\boldsymbol{\beta}}-{\boldsymbol{\beta}}_0\|_{\infty}}}{{\hat{\sigma}_{\varepsilon}}}
						\leq x\right)-\Pr^{*}(\hat{L} \leq x)\right|\rightarrow 0, \notag 
					\end{align}
					where $\Pr^{*}(\cdot)=\Pr(\cdot|\mathcal{D})$ denotes the conditional probability.
				\end{theorem}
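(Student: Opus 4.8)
The plan is to establish the result in three layers, chaining together the Gaussian approximation already proved in Theorem \ref{Gaussianapproximationtheoremresult} with a Gaussian comparison step and a conditional (bootstrap) Gaussian approximation step. Writing $T = \sqrt{n}\|\tilde{\boldsymbol{\beta}}-\boldsymbol{\beta}_0\|_\infty$, $T_0 = \|\boldsymbol{V}\|_\infty$, and $\hat L = n^{-1/2}\|\hat{\boldsymbol{\Theta}}\tilde{\boldsymbol{U}}^\top\boldsymbol{\upsilon}\|_\infty$, I would first note that $T/\hat\sigma_\varepsilon$ and $T/\sigma_\varepsilon$ are uniformly close in distribution because $\hat\sigma_\varepsilon^2-\sigma_\varepsilon^2 = o_{\mathbb P}(1/\log p)$ by Proposition \ref{sigma estimation consistency lemma}, and $T = O_{\mathbb P}(\sqrt{\log p})$ together with the anti-concentration of $\|\boldsymbol{V}\|_\infty$ (Nazarov's inequality, which gives $\sup_x \Pr(|\,\|\boldsymbol{V}\|_\infty - x\,| \le \epsilon) \lesssim \epsilon\sqrt{\log p}$) controls the distortion from rescaling by a factor $1+o_{\mathbb P}(1/\log p)$. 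Combined with Theorem \ref{Gaussianapproximationtheoremresult}, this yields $\sup_{x>0}|\Pr(T/\hat\sigma_\varepsilon \le x) - \Pr(\|\boldsymbol{V}\|_\infty/\sigma_\varepsilon \le x)| \to 0$.

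Next I would handle the bootstrap side. Conditionally on the data $\mathcal D$, $\hat{\boldsymbol{\Theta}}\tilde{\boldsymbol{U}}^\top\boldsymbol{\upsilon}$ is a mean-zero Gaussian vector in $\mathbb{R}^p$ with covariance $\hat{\boldsymbol{\Theta}}\tilde{\boldsymbol{\Sigma}}\hat{\boldsymbol{\Theta}}^\top$ (up to the factor $n$), so $\hat L$ is exactly the max-norm of a Gaussian vector $\hat{\boldsymbol{V}}$ with $\Cov(\hat{\boldsymbol{V}}) = \hat{\boldsymbol{\Theta}}\tilde{\boldsymbol{\Sigma}}\hat{\boldsymbol{\Theta}}^\top$. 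The target is $\|\boldsymbol{V}\|_\infty/\sigma_\varepsilon$ where $\Cov(\boldsymbol{V}) = \sigma_\varepsilon^2\boldsymbol{\Theta}$, i.e.\ the max-norm of a Gaussian with covariance $\boldsymbol{\Theta}$. So I need the Gaussian comparison inequality of Chernozhukov–Chetverikov–Kato: $\sup_x|\Pr(\|\hat{\boldsymbol V}\|_\infty \le x) - \Pr(\|\tilde{\boldsymbol V}\|_\infty \le x)| \lesssim \Delta^{1/3}(1\vee\log(p/\Delta))^{2/3}$, where $\Delta = \|\hat{\boldsymbol\Theta}\tilde{\boldsymbol\Sigma}\hat{\boldsymbol\Theta}^\top - \boldsymbol\Theta\|_{\max}$ (and similarly comparing $\boldsymbol{\Theta}$ against $\sigma_\varepsilon^{-2}\Cov(\boldsymbol V)$, which is exact). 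The crux is therefore to bound $\Delta = o_{\mathbb P}\{(\log p)^{-2}\}$.

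To bound $\Delta$ I would decompose $\hat{\boldsymbol\Theta}\tilde{\boldsymbol\Sigma}\hat{\boldsymbol\Theta}^\top - \boldsymbol\Theta = \hat{\boldsymbol\Theta}(\tilde{\boldsymbol\Sigma} - \boldsymbol\Sigma_u^*)\hat{\boldsymbol\Theta}^\top + (\hat{\boldsymbol\Theta}\boldsymbol\Sigma_u^*\hat{\boldsymbol\Theta}^\top - \boldsymbol\Theta)$ and, for the second piece, use $\hat{\boldsymbol\Theta}\boldsymbol\Sigma_u^* - \boldsymbol I_p = O_{\mathbb P}(\delta_n)$ in max-norm (available from Proposition B.1 / Lemma B.9 referenced in the excerpt) together with $\|\hat{\boldsymbol\Theta}\|_\infty \le \|\boldsymbol\Theta\|_\infty \le M_\Sigma$ (from Assumption \ref{assumption on the sparsity of the inverse of SIgmau}), so that $\|\hat{\boldsymbol\Theta}\boldsymbol\Sigma_u^*\hat{\boldsymbol\Theta}^\top - \hat{\boldsymbol\Theta}\|_{\max} \le \|\hat{\boldsymbol\Theta}\boldsymbol\Sigma_u^*-\boldsymbol I_p\|_{\max}\|\hat{\boldsymbol\Theta}^\top\|_1 = O_{\mathbb P}(\delta_n M_\Sigma)$, and $\|\hat{\boldsymbol\Theta}-\boldsymbol\Theta\|_{\max} \le \|\hat{\boldsymbol\Theta}-\boldsymbol\Theta\|_\infty = O_{\mathbb P}(\Delta_\infty)$ with $\Delta_\infty = \delta_n^{1-q}$. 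For the first piece I need $\|\tilde{\boldsymbol\Sigma}-\boldsymbol\Sigma_u^*\|_{\max} = O_{\mathbb P}(\delta_n)$, which is where the factor estimation error, the B-spline projection error in $\tilde{\boldsymbol U} = \hat{\boldsymbol U} - \hat{\boldsymbol m}_{\mathcal G_u}(\boldsymbol Z)$, and the idiosyncratic sampling error all enter — this is essentially the content of the auxiliary lemmas and is the main obstacle; everything reduces to showing $\delta_n = O\{(\log p)/\sqrt n + \sqrt{M_n(\log p)/n}\}$ and checking, under the rate conditions $s = o[\sqrt n/(\log p)^2 + \sqrt{n/\{M_n(\log p)^3\}}]$ and $(\log p)^{(12-4q)/(3-3q)} = o(n)$, that $\delta_n^{1-q}$ times the relevant powers of $\log p$ is $o\{(\log p)^{-2}\}$, so that $\Delta^{1/3}(\log p)^{2/3} \to 0$. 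Finally I would assemble: triangle-inequality through $\Pr^*(\hat L \le x) \approx \Pr(\|\tilde{\boldsymbol V}\|_\infty\le x) = \Pr(\|\boldsymbol V\|_\infty/\sigma_\varepsilon \le x) \approx \Pr(T/\hat\sigma_\varepsilon \le x)$, each "$\approx$" being uniform in $x$ and $o_{\mathbb P}(1)$, which gives the claim (noting the conditional statement holds with probability tending to one, which suffices for the stated convergence).
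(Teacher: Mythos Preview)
Your strategy matches the paper's: triangle-inequality through $\|\boldsymbol V\|_\infty$, invoke Theorem~\ref{Gaussianapproximationtheoremresult} for one leg, and a Gaussian-to-Gaussian comparison for the bootstrap leg. Two minor differences are harmless: you peel off the $\hat\sigma_\varepsilon$ rescaling separately via anti-concentration, whereas the paper absorbs it into the covariance comparison by working directly with $\hat\sigma_\varepsilon^2\hat{\boldsymbol\Theta}\tilde{\boldsymbol\Sigma}\hat{\boldsymbol\Theta}^\top$ versus $\sigma_\varepsilon^2\boldsymbol\Theta$; and your decomposition of $\hat{\boldsymbol\Theta}\tilde{\boldsymbol\Sigma}\hat{\boldsymbol\Theta}^\top-\boldsymbol\Theta$ routes through $\tilde{\boldsymbol\Sigma}-\boldsymbol\Sigma_u^*$, while the paper writes $\hat\sigma_\varepsilon^2\hat{\boldsymbol\Theta}\tilde{\boldsymbol\Sigma}\hat{\boldsymbol\Theta}^\top-\sigma_\varepsilon^2\boldsymbol\Theta = \hat\sigma_\varepsilon^2(\hat{\boldsymbol\Theta}\tilde{\boldsymbol\Sigma}-\boldsymbol I_p)\hat{\boldsymbol\Theta}^\top + \hat\sigma_\varepsilon^2(\hat{\boldsymbol\Theta}-\boldsymbol\Theta) + (\hat\sigma_\varepsilon^2-\sigma_\varepsilon^2)\boldsymbol\Theta$ and reads $\|\hat{\boldsymbol\Theta}\tilde{\boldsymbol\Sigma}-\boldsymbol I_p\|_{\max}\le\delta_n$ straight off the CLIME constraint, which is slightly cleaner.

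There is, however, one quantitative gap. You invoke the classical Chernozhukov--Chetverikov--Kato comparison bound $\Delta^{1/3}\{1\vee\log(p/\Delta)\}^{2/3}$, which requires $\Delta(\log p)^2\to 0$, i.e.\ $\Delta=o\{(\log p)^{-2}\}$. The paper instead appeals to Lemma~2.1 of Chernozhukov et al.\ (2023), whose bound is of order $(\Delta\log p)^{1/2}$ and needs only $\Delta\log p\to 0$. This matters: under the stated conditions $(\log p)^{12}=O(n)$ and $(\log p)^{(12-4q)/(3-3q)}=o(n)$ one can verify $\delta_n^{1-q}\log p\to 0$, but $\delta_n^{1-q}(\log p)^2\to 0$ does \emph{not} follow for $q$ close to $1$. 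Concretely, the first term of $\delta_n^{1-q}(\log p)^2$ is $(\log p)^{3-q}n^{-(1-q)/2}$, which vanishes only if $(\log p)^{2(3-q)/(1-q)}=o(n)$; at $q=0.9$ this is $(\log p)^{42}=o(n)$, strictly stronger than the assumed $(\log p)^{28}=o(n)$. So the rate check you sketch would fail on that range. The fix is immediate---replace the 2015-vintage comparison inequality by the sharper 2023 version---but as written the argument does not close for the full range $0\le q<1$.
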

				
				Theorem \ref{the validity of the bootstrap procedure} establishes the validity of the proposed bootstrap procedure, based on the Gaussian approximation theory outlined in Theorem \ref{Gaussianapproximationtheoremresult}. Building on this foundation, we can define our decision rule as follows
				\begin{align*} 
					\psi_{\infty, \alpha}=\mathbb{I}\left(T_{n} >  \hat{c}_{1-\alpha}\right). 
				\end{align*} 
				The null hypothesis $H_{0}$ is rejected if and only if $\psi_{\infty, \alpha}=1$.

            Finally,    we consider the power  performance.
To demonstrate the validity of the test, we consider the following   local alternative hypotheses family for $\boldsymbol{\beta}$.
\begin{align} \label{local alternative hypotheses family}
\mathcal{B}(C)=\left\{\boldsymbol{\beta} \in \mathbb{R}^{p}: \max \limits_{j \in [p]}|\beta_j|\geq \sqrt{C\frac{\log p}{n}}\right\},
\end{align}
where  $C$ is a positive constant.
Next, we present the power property of the test statistic $T_n$ under the alternative hypothesis $H_1$.
\begin{theorem} \label{power theorem}
Suppose that the conditions in Theorem \ref{the validity of the  bootstrap procedure} are hold. For the test statistic $T_n$ defined in \eqref{test statistic construction}, we have 
\begin{align}
\lim\limits_{(n,p)\rightarrow \infty} \inf\limits_{\boldsymbol{\beta}  \in  \mathcal{B}(C) }\Pr\left( T_n \geq \hat{c}_{1-\alpha}\right)=1,
\end{align}
where $\mathcal{B}(C)$ is the  local
alternative hypotheses family defined in \eqref{local alternative hypotheses family}.
\end{theorem}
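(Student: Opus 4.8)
The plan is to show that, on the local alternative, $T_n$ diverges fast enough in probability to exceed the critical value. By the reverse triangle inequality,
\[
\sqrt n\|\tilde{\boldsymbol{\beta}}\|_\infty\ \ge\ \sqrt n\|\boldsymbol{\beta}_0\|_\infty-\sqrt n\|\tilde{\boldsymbol{\beta}}-\boldsymbol{\beta}_0\|_\infty,
\]
and for $\boldsymbol{\beta}_0\in\mathcal{B}(C)$ the first term is at least $\sqrt{C\log p}$. Hence it suffices to prove that, uniformly over the alternative family (intersected with the sparsity/boundedness class in which the conditions of Theorem \ref{the validity of the  bootstrap procedure} are imposed), the fluctuation $\sqrt n\|\tilde{\boldsymbol{\beta}}-\boldsymbol{\beta}_0\|_\infty$ and the bootstrap quantile $\hat c_{1-\alpha}$ are both $O_{\mathbb{P}}(\sqrt{\log p})$ while $\hat\sigma_\varepsilon$ is bounded away from $0$ and $\infty$ in probability; then $T_n-\hat c_{1-\alpha}\ge\{\sqrt C/\hat\sigma_\varepsilon-O_{\mathbb{P}}(1)\}\sqrt{\log p}$, which is driven to $+\infty$ once $C$ is taken large enough.

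\textbf{Step 1 (fluctuation).} The stochastic expansion of $\tilde{\boldsymbol{\beta}}-\boldsymbol{\beta}_0$ established in the proof of Theorem \ref{Gaussianapproximationtheoremresult} never invokes $\boldsymbol{\beta}_0=\boldsymbol{0}$; it uses only the sparsity budget on $s$ and $\|\boldsymbol{\beta}_0\|_\infty\le C$, both retained on $\mathcal{B}(C)$ under the stated conditions. Thus, uniformly over the alternative,
\[
\sqrt n(\tilde{\boldsymbol{\beta}}-\boldsymbol{\beta}_0)=\frac{1}{\sqrt n}\boldsymbol{\Theta}\boldsymbol{U}^{*\top}\boldsymbol{\varepsilon}+\boldsymbol{R},\qquad \|\boldsymbol{R}\|_\infty=o_{\mathbb{P}}(1).
\]
The leading term has sub-exponential coordinates $n^{-1/2}\sum_i(\boldsymbol{\Gamma}_j^\top\boldsymbol{u}_i^{*})\varepsilon_i$ with $\|\boldsymbol{\Gamma}_j\|_2\le\|\boldsymbol{\Theta}\|_\infty\le M_{\Sigma}$; by Bernstein's inequality, a union bound over $j\in[p]$, and the moment/dimension conditions of Theorem \ref{Gaussianapproximationtheoremresult}, its $\ell_\infty$ norm is $O_{\mathbb{P}}(\sqrt{\log p})$ (equivalently it equals $\|\boldsymbol{V}\|_\infty$ up to $o_{\mathbb{P}}(1)$, and the coordinate variances $\sigma_\varepsilon^2\Theta_{jj}$ are bounded since $\lambda_{\min}(\boldsymbol{\Sigma}_{u}^{*})$ is bounded away from zero and $\|\boldsymbol{\Sigma}_{u}^{*}\|_2$ is bounded). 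Hence $\sqrt n\|\tilde{\boldsymbol{\beta}}-\boldsymbol{\beta}_0\|_\infty=O_{\mathbb{P}}(\sqrt{\log p})$ uniformly over $\mathcal{B}(C)$.

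\textbf{Step 2 (critical value and variance).} The quantile $\hat c_{1-\alpha}$ is a function of $\hat{\boldsymbol{\Theta}}$, $\tilde{\boldsymbol{U}}$ and the multipliers $\boldsymbol{\upsilon}$ only, none of which depends on $\boldsymbol{Y}$, so its law is identical under $H_1$ and $H_0$. Conditionally on the data, $\hat L=n^{-1/2}\|\hat{\boldsymbol{\Theta}}\tilde{\boldsymbol{U}}^\top\boldsymbol{\upsilon}\|_\infty$ is the maximum of $p$ centered Gaussians with variances $(\hat{\boldsymbol{\Theta}}\tilde{\boldsymbol{\Sigma}}\hat{\boldsymbol{\Theta}}^\top)_{jj}$, which are uniformly bounded on the high-probability event $\{\|\hat{\boldsymbol{\Theta}}\tilde{\boldsymbol{\Sigma}}-\boldsymbol{I}_p\|_{\max}\le\delta_n,\ \|\hat{\boldsymbol{\Theta}}\|_\infty\lesssim M_{\Sigma}\}$ supplied by Proposition B.1 in the Supplementary Material; the Gaussian maximal inequality then gives $\hat c_{1-\alpha}\le C_\alpha\sqrt{\log p}$ with probability tending to one (alternatively, combine Theorem \ref{the validity of the  bootstrap procedure} with $c_{1-\alpha}=O(\sqrt{\log p})$). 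Finally, $\hat\sigma_\varepsilon^2$ stays bounded away from $0$ and $\infty$ in probability on $\mathcal{B}(C)$: the argument behind Proposition \ref{sigma estimation consistency lemma} only requires the fitted mean $\hat{\boldsymbol{u}}_i^\top\hat{\boldsymbol{\beta}}+\hat{\boldsymbol{f}}_i^\top\tilde{\boldsymbol{\varphi}}+\boldsymbol{\pi}_i^\top\hat{\boldsymbol{\xi}}$ to be consistent for $\boldsymbol{u}_i^\top\boldsymbol{\beta}_0+\boldsymbol{f}_i^\top\boldsymbol{\varphi}_0+g_0(Z_i)$, which holds under $H_1$ by Theorem \ref{consistency property} and Proposition \ref{g_hat error bound proposition}, so the residual sum of squares concentrates at $\sigma_\varepsilon^2$; in particular $\hat\sigma_\varepsilon\hat c_{1-\alpha}\le C_0\sqrt{\log p}$ with probability tending to one for some finite $C_0$.

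\textbf{Step 3 (combine) and main obstacle.} On $\mathcal{B}(C)$,
\[
T_n-\hat c_{1-\alpha}\ \ge\ \frac{\sqrt{C\log p}-\sqrt n\|\tilde{\boldsymbol{\beta}}-\boldsymbol{\beta}_0\|_\infty}{\hat\sigma_\varepsilon}-\hat c_{1-\alpha}\ \ge\ \Bigl(\frac{\sqrt C}{\hat\sigma_\varepsilon}-O_{\mathbb{P}}(1)\Bigr)\sqrt{\log p},
\]
uniformly over the alternative, where the $O_{\mathbb{P}}(1)$ term absorbs $\sqrt n\|\tilde{\boldsymbol{\beta}}-\boldsymbol{\beta}_0\|_\infty/(\hat\sigma_\varepsilon\sqrt{\log p})$ and $\hat c_{1-\alpha}/\sqrt{\log p}$ (both $O_{\mathbb{P}}(1)$ by Steps 1--2). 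Since $\hat\sigma_\varepsilon$ is bounded above in probability, choosing $C$ large enough that $\sqrt C$ exceeds that upper bound times the constant governing the $O_{\mathbb{P}}(1)$ term forces the right-hand side to diverge to $+\infty$ in probability; this yields $\inf_{\boldsymbol{\beta}\in\mathcal{B}(C)}\Pr(T_n\ge\hat c_{1-\alpha})\to1$. I expect the main obstacle to be Step 1 --- carefully verifying that the expansion of $\tilde{\boldsymbol{\beta}}-\boldsymbol{\beta}_0$ and its uniform $O_{\mathbb{P}}(\sqrt{\log p})$ sup-norm control genuinely carry over to $\boldsymbol{\beta}_0\neq\boldsymbol{0}$ with no hidden dependence on the null --- although, because the proof of Theorem \ref{Gaussianapproximationtheoremresult} only exploits $\|\boldsymbol{\beta}_0\|_\infty\le C$ and the sparsity budget, this reduces largely to bookkeeping; the one genuinely new observation is that $\hat c_{1-\alpha}$ is $\boldsymbol{Y}$-free and therefore inherits its bound directly from the $H_0$ analysis.
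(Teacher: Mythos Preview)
Your approach is correct and is the standard route for power analysis of max-type debiased tests: lower-bound $\sqrt n\|\tilde{\boldsymbol\beta}\|_\infty$ via the reverse triangle inequality, control the fluctuation $\sqrt n\|\tilde{\boldsymbol\beta}-\boldsymbol\beta_0\|_\infty=O_{\mathbb P}(\sqrt{\log p})$ using the same expansion that underlies Theorem~\ref{Gaussianapproximationtheoremresult}, bound the bootstrap quantile by $O(\sqrt{\log p})$ via the Gaussian maximal inequality, and conclude for sufficiently large $C$. This matches the argument structure the paper adopts (in the Supplementary Material) and the references it cites for the separation rate; your observation that $\hat c_{1-\alpha}$ is built only from $(\boldsymbol X,\boldsymbol Z)$ and hence is $\boldsymbol Y$-free is exactly the right shortcut for Step~2.

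Two small points worth tightening. First, make explicit that the theorem should be read as holding for all sufficiently large $C$ (this is how the cited minimax literature states it), since your Step~3 needs $\sqrt C$ to dominate an $O_{\mathbb P}(1)$ constant that is not driven to zero. Second, the infimum in the statement is over $\mathcal B(C)$ as defined, which imposes no sparsity or upper bound on $\|\boldsymbol\beta_0\|_\infty$; you correctly intersect with the parameter class coming from the conditions of Theorem~\ref{the validity of the  bootstrap procedure} (which inherit $\|\boldsymbol\beta_0\|_\infty\le C$ and the sparsity budget from Theorem~\ref{consistency property}), but say so up front rather than in a parenthetical, since the uniformity of all $O_{\mathbb P}$ bounds in Steps~1--2 hinges on this restriction.
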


Theorem \ref{power theorem} demonstrates that our testing procedure maintains high power even when only a single component of the parameter $\boldsymbol{\beta}$ exceeds the order of $\sqrt{C(\log p)/n}$. This result highlights the effectiveness of our method under sparse alternative hypotheses, making it  powerful in such scenarios. Moreover, this separation rate aligns with the minimax optimal rate for detecting local alternative hypotheses, as established in prior works by \cite{verzelen2012minimax}, \cite{tony2014two}, \cite{zhang2017simultaneous} and \cite{ma2021global}.

				\section{Simulations}\label{Numerical studies section}
				
				In this section, we conduct simulation studies to assess the finite sample performance of the proposed method.

				\subsection{Estimation performance}
				\label{Estimation performance}
                Throughout the simulation study, we generate data based on the following  model:
				\begin{align}
					\boldsymbol{Y}&=\boldsymbol{U}\boldsymbol{\beta}_0+\boldsymbol{F}\boldsymbol{\phi}_0+g_{0}(\boldsymbol{Z})+\boldsymbol{\varepsilon}, \notag \\
					\text{with} \ \ \boldsymbol{X}&=\boldsymbol{F}\boldsymbol{B}^{\top}+\boldsymbol{U}. \notag
				\end{align}
                Here  $\boldsymbol{\phi}_0=\boldsymbol{B}^{\top}\boldsymbol{\beta}_0$.
                  To illustrate the accuracy of our estimation, we set  the number of latent factors to $K=2$, and the dimension of $\boldsymbol{x}$ to either $p=200$ or $500$. The first $s=5$  entries of $\boldsymbol{\beta}_0$ are set  to 2,  while the remaining $p-s$ entries are set to 0. 
				Throughout this subsection, we generate each entry of  $\boldsymbol{F}$ and $\boldsymbol{U}$ from the standard Gaussian distribution $\mathrm{N}(0, 1)$, and each entry of $\boldsymbol{B}$ 
				is randomly drawn from the uniform distribution $\mathrm{Unif}(-1, 1)$.
                The error term $\varepsilon$ is generated from   the Gaussian distribution $\mathrm{N}(0, 0.25)$.  In addition, we consider the following  two forms of 
				$g_0(Z_i), i=1,\ldots,n$.
				\begin{align}
					&\textbf{Model 1}: 
					g_0(Z_i)= Z_i; \label{Model 1 simulation} \\
					&\textbf{Model 2}: 
					g_0(Z_i)=\sin ( 2\pi Z_i). \label{Model 3 simulation}
				\end{align}
                The variable $Z$ is generated from the uniform distribution $\mathrm{Unif}(0, 1)$. 
				Note that Model \eqref{Model 1 simulation} corresponds to a linear model, while Model  \eqref{Model 3 simulation} corresponds to a nonlinear model.
			Under each setting, we generate $n=200$ i.i.d. observations, and replicate 500 times. 
                Following the recommendation of \cite{lian2019projected}, we use cubic splines (of order 4) to approximate the nonparametric function $g_0(\cdot)$ and set the number of internal nodes to $n^{1/9}$, which is the theoretical
				optimal order.

To emphasize the importance of incorporating latent factors and partially linear structures in modeling the relationships between response variables and covariates, we compare our method (denoted as ``FAPLM'') with two existing approaches.
The first is the method of \cite{lian2019projected} (denoted as ``PLM''), which uses B-spline method  to approximate  the nonparametric function and then applies the standard Lasso method with partial penalties for parameter estimation, but without accounting for latent factor effects. 
The second method is that of \cite{fanJ2023}, referred to as “FALM”, which estimates parameters under the assumption of linear relationships between  predictors and the response $Y$. This approach integrates 
$\boldsymbol{x}$ and ${Z}$, while simultaneously considering their factor decomposition.
We measure the prediction accuracy
by using the following indicators.
\begin{itemize}
    \item $\ell_1$-norm error: $\|\hat{\boldsymbol{\beta}}-{\boldsymbol{\beta}}\|_1$.
     \item Root Mean Squared Error (RMSE): $n^{-1/2}[\sum_{i=1}^n\{\hat{g}(Z_i)-g_0(Z_i)\}^2]^{1/2}$.
\end{itemize}
Table \ref{accuary table1} presents the estimation results for FAPLM, PLM, and FALM under the linear model corresponding to the link function \eqref{Model 1 simulation}. The results show that, for both $p=200$ and $p=500$, FALM outperforms the other methods in terms of both 
$\ell_1$-norm error and RMSE, indicating that FALM provides the most accurate estimation. This is because the model in these settings corresponds to the true model of \cite{fanJ2023}. In contrast, our method still relies on B-splines to estimate the nonparametric part, which may introduce approximation errors. However, compared to PLM, our method demonstrates superior performance in both $\ell_1$-norm error and RMSE, highlighting the advantage of incorporating factor analysis into partially linear models for more precise estimation.
Table \ref{accuary table2} presents the estimation results for FAPLM, PLM, and FALM under the nonlinear model corresponding to the link function \eqref{Model 3 simulation}, with results differing significantly from those in the linear case. For both  $p=200$ and $p=500$, our method achieves the best performance in terms of $\ell_1$-norm error and RMSE, while FALM performs the worst. This is because, for the nonlinear form of the nonparametric function, the method of \cite{fanJ2023} incorrectly assumes linearity and applies factor analysis and parameter estimation based on this assumption, leading to substantial errors. In contrast, our method uses B-splines to approximate the nonparametric functions, resulting in a smaller approximation error and more accurate estimation. Furthermore, compared to PLM, our method benefits from the factor model's ability to more effectively capture feature dependencies, resulting in improved estimation performance.

				\begin{table}[H]
		\small
		
		\renewcommand\arraystretch{1.1}
		\centering \tabcolsep 12pt \LTcapwidth 6in
		\caption{Empirical average of all indicators in linear model \eqref{Model 1 simulation}  based on 500 replications. The ``FAPLM'', ``PLM'' and   ``FALM'' represent the estimation results under FAPLM in this paper and PLM  without incorporating the factor effect in \cite{lian2019projected}, and  linear regression model accounting for latent factor effects in \cite{fanJ2023},  respectively.
		}
		\label{accuary table1}
		\begin{threeparttable}
			\begin{tabular}{ccccc}
				\toprule
				Method &$p$& FAPLM    & PLM& FALM    \\ \midrule
				$\ell_1$-norm error&\multirow{2}{*}{$200$}&\multirow{1}{*}{0.675}       & 0.724     &  0.566         \\ 
				RMSE& &\multirow{1}{*}{0.085}           &  0.096    & 0.073       \\
                \hline
				
				$\ell_1$-norm error&\multirow{2}{*}{$500$}&\multirow{1}{*}{0.796}       & 0.860     &  0.630   \\   
				RMSE& &\multirow{1}{*}{0.087}           &  0.099     & 0.076    \\ 
				\bottomrule
			\end{tabular}
		\end{threeparttable}
	\end{table}

		\begin{table}[H]
		\small
		
		\renewcommand\arraystretch{1.1}
		\centering \tabcolsep 12pt \LTcapwidth 6in
		\caption{Empirical average of all indicators in nonlinear model \eqref{Model 3 simulation}  based on 500 replications. The ``FAPLM'', ``PLM'' and   ``FALM'' represent the estimation results under FAPLM in this paper and PLM  without incorporating the factor effect  in \cite{lian2019projected}, and linear regression model accounting for latent factor effects in \cite{fanJ2023},  respectively.
		}
		\label{accuary table2}
		\begin{threeparttable}
			\begin{tabular}{ccccc}
				\toprule
				Method &$p$& FAPLM    & PLM& FALM    \\ \midrule
				$\ell_1$-norm error&\multirow{2}{*}{$200$}&\multirow{1}{*}{0.636}       & 0.903     &  0.912         \\ 
				RMSE& &\multirow{1}{*}{0.096}           &  0.117    & 0.660       \\
                \hline
				
				$\ell_1$-norm error&\multirow{2}{*}{$500$}&\multirow{1}{*}{0.772}       & 0.899     &  1.015   \\   
				RMSE& &\multirow{1}{*}{0.096}           &  0.113     & 0.659    \\ 
				\bottomrule
			\end{tabular}
		\end{threeparttable}
	\end{table}

				\subsection{Factor-adjusted projection debiased Lasso inference}
				In this subsection,  we set $K=2$, $p=200$ or $500$.
				Here  the entries of $\boldsymbol{B}$ are  generated from the uniform  distribution $\mathrm{Unif}(-1,1)$,  every row of $\boldsymbol{F}$ follows from $\mathrm{N}(\boldsymbol{0},\boldsymbol{I}_{K})$, and   every row of  $\boldsymbol{U}$ follows from $ \mathrm{N}(\boldsymbol{0},\boldsymbol{I}_{p})$.				
				We set ${\bX}=\boldsymbol{F}\boldsymbol{B}^{\top}+\boldsymbol{U}$.
                The distributions of the variable 
$Z$ and the error term $\varepsilon$, and the forms of $g_0(\cdot)$ are the same as those in Section \ref{Estimation performance}.
				We specify the parameter vector $\boldsymbol{\beta}_0=\omega\ast\left(\mathbf{1}_{5},\mathbf{0}_{p-5}\right)^{\top}$, $\omega \geq 0$.
				When $\omega=0$, it indicates that the null hypothesis holds, and the simulation results correspond to the empirical size. Otherwise, they correspond to the empirical power.

                To illustrate the importance of accounting for latent factors and leveraging PLM to capture the partially linear structure between the response variable and covariates, we compare our method with the testing approaches of \cite{zhu2019high} (denoted as ``PDT") and \cite{fanJ2023} (denoted as ``FabT"). While PDT employs projection and debiased techniques within PLMs to test parameters, it overlooks latent factors. In contrast, FabT incorporates latent factors but assumes a fully linear structure between the response and covariates.
The results for dimensions $p=200$ and $p=500$  are presented in Figures \ref{p_200_test} and \ref{p_500_test}, respectively. 
The label ``FAPDT" corresponds to the method proposed in this paper.
When $p=200$, for both Models \eqref{Model 1 simulation} and \eqref{Model 3 simulation},  when $\omega=0$, all three tests exhibit size performance close to the nominal significance level of 0.05.  As $\omega$ increases, the power curves of FAPDT and FabT for the linear model show similar convergence rates. However, for the nonlinear model, the power curve of FAPDT increases at a faster rate than that of FabT. This highlights the advantage of using a PLM to more effectively capture the partially linear structure between the response and covariates. In addition, for both Models \eqref{Model 1 simulation} and \eqref{Model 3 simulation}, the power convergence rates of FAPDT and FabT consistently outperform those of PDT. This underscores the critical role of considering latent factors to enhance model performance. As the parameter dimension increases to $p=500$, the patterns remain consistent, further reinforcing the effectiveness of FAPDT and the importance of accounting for latent factors.

				\begin{figure}[]%
					\centering
					\includegraphics[width=0.9\textwidth]{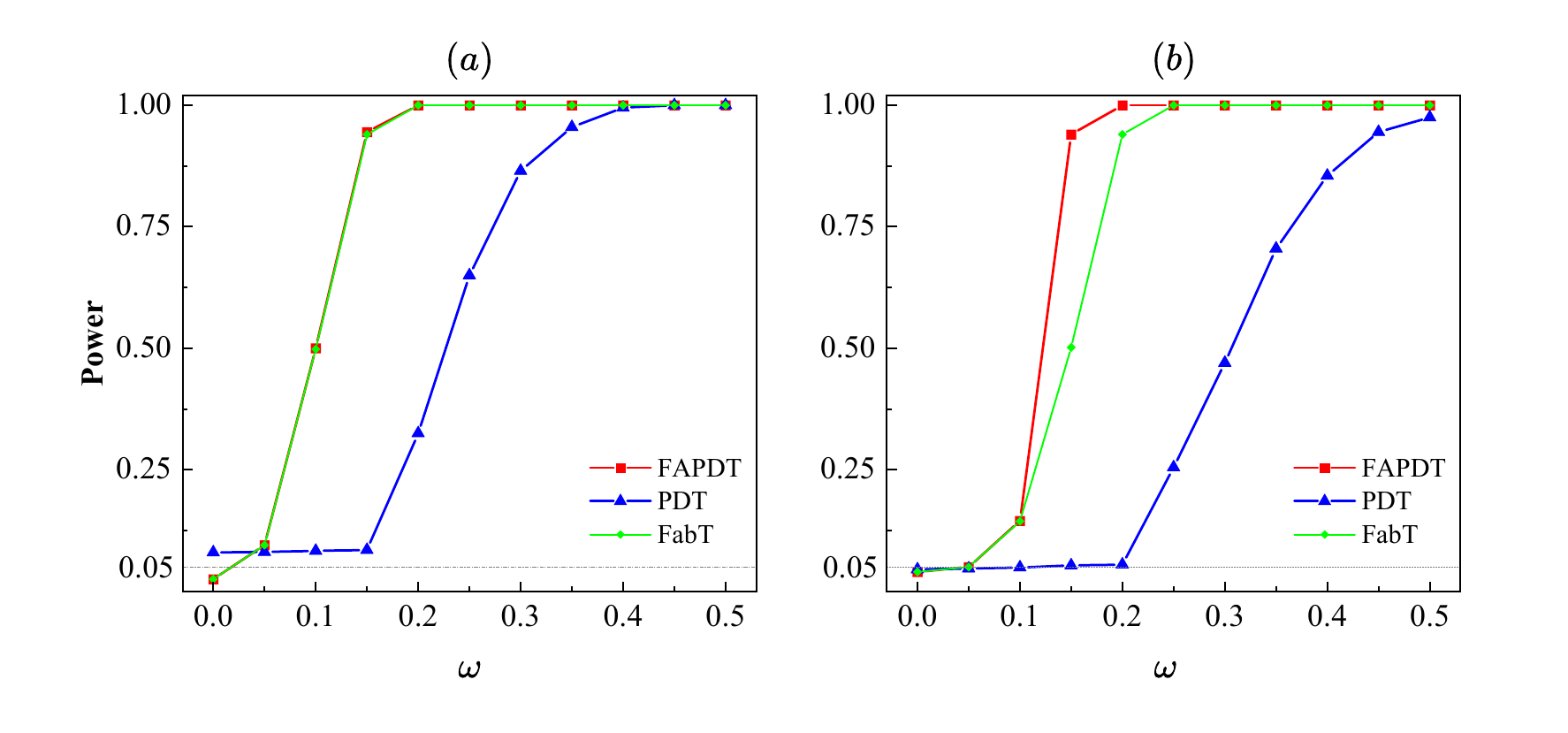}
					\caption{Power curves of linear model \eqref{Model 1 simulation} (the left panel) and nonlinear model \eqref{Model 3 simulation} (the right panel) with $p=200$ and $\boldsymbol{\beta}_0=\omega\ast\left(\mathbf{1}_{5},\mathbf{0}_{p-5}\right)^{\top}$.
						The  ``FAPDT'',  ``PDT''   and ``FabT''  signify the results derived from the method in this paper   and  the approach in \cite{zhu2019high} and \cite{fanJ2023}, respectively. } \label{p_200_test}
				\end{figure}

				\begin{figure}[]%
					\centering
					\includegraphics[width=0.9\textwidth]{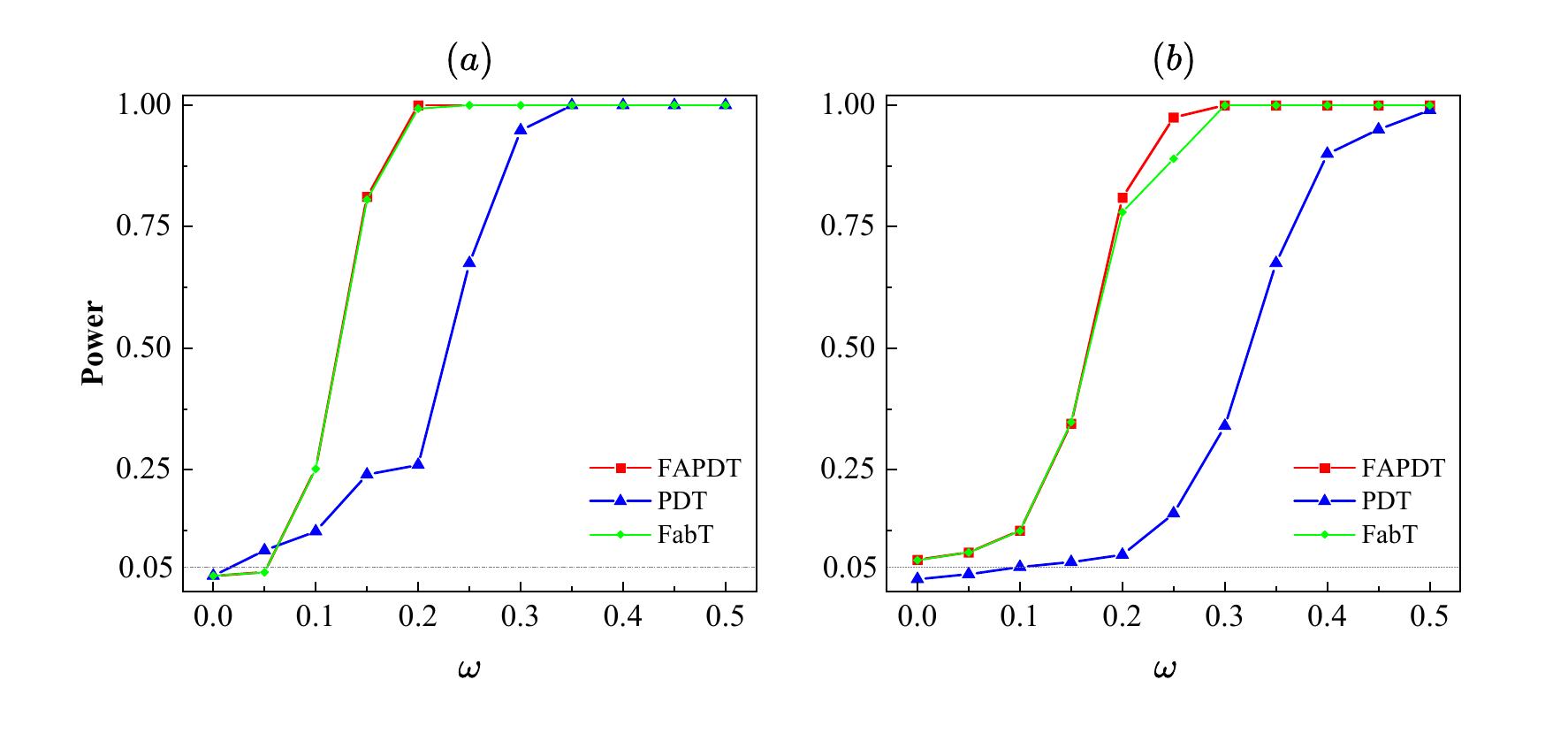}
					\caption{Power curves of linear model \eqref{Model 1 simulation} (the left panel) and nonlinear model \eqref{Model 3 simulation} (the right panel) with $p=500$ and $\boldsymbol{\beta}_0=\omega\ast\left(\mathbf{1}_{5},\mathbf{0}_{p-5}\right)^{\top}$.
						The  ``FAPDT'',  ``PDT''   and ``FabT''  signify the results derived from the method in this paper   and  the approach in \cite{zhu2019high} and \cite{fanJ2023}, respectively.} \label{p_500_test}
				\end{figure}

				\section{Real data analysis}\label{Real data analysis section}
In this section, we apply the proposed methods to analyze the birth weight data \citep{votavova2011transcriptome}, which includes blood samples and gene expression profiles collected from 72 pregnant women. Along with infant birth weights, maternal characteristics such as age, BMI, gestational age, parity, and blood cotinine levels are recorded. 
Low birth weight is strongly associated with increased risks of morbidity and mortality in newborns \citep{mcintire1999birth} and is further linked to a higher likelihood of developing chronic conditions later in life, such as obesity and type II diabetes \citep{ishida2012maternal}. Fetal growth, which is largely determined by the nutrients delivered through the placenta, is influenced by both environmental factors and gene expression \citep{ishida2012maternal}. 
Building on these findings, it is crucial to investigate the relationship between maternal gene expression and low birth weight. To effectively capture the partially linear structure between the response variable and covariates, while accounting for the significant correlations among gene expression profiles discussed in Section \ref{Introduction}, we employ the FAPLM for robust estimation and inference.

After excluding incomplete cases, a total of 65 individuals are retained, each with expression data for 24,539 genetic loci. For initial screening, we adopt the Pearson correlation coefficient proposed by \cite{pearson1895vii} to assess the association between the response variable and each genetic locus. The top 200 loci are selected and treated as linear covariates. Additionally, parity, maternal BMI, gestational age, and blood cotinine levels are  also included as linear covariates. Considering the potential nonliear relationship between maternal age and birth weight, maternal age is incorporated into the model as a nonlinear component.

Firstly, we assess the estimation performance of the FAPLM proposed in this paper, alongside the PLM from \cite{lian2019projected}, and the FALM from \cite{fanJ2023}. The evaluation is based on 200 random partitions of the dataset, with 50 subjects randomly selected as the training set and the remaining 15 as the test set. Using the model trained on each partition, we predict the response variables in the test set. Table \ref{Parameter estimation and prediction} reports  the average values of the following indicators to evaluate the estimation accuracy and predictive performance.
\begin{itemize}
    \item RMSE: $$\frac{1}{\sqrt{n_\text{test}}}\left\{\sum_{i=1}^{n_\text{test}}\left(\hat{Y}_i-Y_i^{*} \right)^2\right\}^{1/2}.$$
    Here $n_\text{test}=15$, $Y_i^{*}$ represents the actual response of the $i$-th observation in the test set, while $\hat{Y}_i$ denotes its corresponding predicted value. 
     \item Out-of-sample $R^2$: 
     $$R^2=1-\frac{\sum_{i=1}^{n_\text{test}}\left(\hat{Y}_i-Y_i^{*} \right)^2}{\sum_{i=1}^{n_\text{test}}\left(\bar{Y}_i-Y_i^{*} \right)^2},$$
     where $\bar{Y}_i=n_\text{test}^{-1}\sum_{i=1}^{n_\text{test}}Y_i^{*} $.
\end{itemize}
The results in Table \ref{Parameter estimation and prediction} show that FAPLM outperforms both PLM and FALM. When covariates exhibit significant correlations, applying PLM directly yields inaccurate estimators of parameters and suboptimal predictive performance, indicating its inability to capture the underlying data structure effectively. Additionally, FAPLM demonstrates superior predictive accuracy compared to FALM, highlighting the necessity of incorporating a partially linear structure to account for nonparametric effects.


\begin{table}[H]
					\small
					
					\renewcommand\arraystretch{1.1}
					\centering \tabcolsep 12pt \LTcapwidth 6in
					\caption{Prediction results of different methods based on 200 random partitions. The ``FAPLM'', ``PLM'' and   ``FALM'' represent the prediction results under FAPLM in this paper and PLM without incorporating the factor effect in \cite{lian2019projected}, and linear regression model accounting for latent factor effects in \cite{fanJ2023},  respectively.}
					\label{Parameter estimation and prediction}
					\begin{threeparttable}
						\begin{tabular}{cccc}
							\toprule
							 & FAPLM &PLM &FALM    \\ \midrule
							RMSE  &   280.94   &341.72 & 288.68    \\ 
							$R^2$&  0.75                    &  0.56 &0.68     \\ 
							\bottomrule
						\end{tabular}
					\end{threeparttable}
				\end{table}


We next illustrate the estimated function 
$\hat{g}(Z)$ using the FAPDT method proposed in this paper and the PDT method from \cite{zhu2019high}. 
The effect of mother’s age on infant birth weight is nonmonotone: the effect is first decreasing (up to age 25), then increasing (to about age 29), then decreasing (to about  age 34), then increasing (to about age 38), and decreasing again.
Furthermore, the nonlinear effects estimated by the two methods differ significantly. 
Specifically, the effect estimated using the FAPDT method varies markedly with maternal age, aligning more closely with real-world observations, while the effect estimated by the PDT method appears smoother and less variable.



\begin{figure}[H]%
					\centering
					\includegraphics[width=0.8\textwidth]{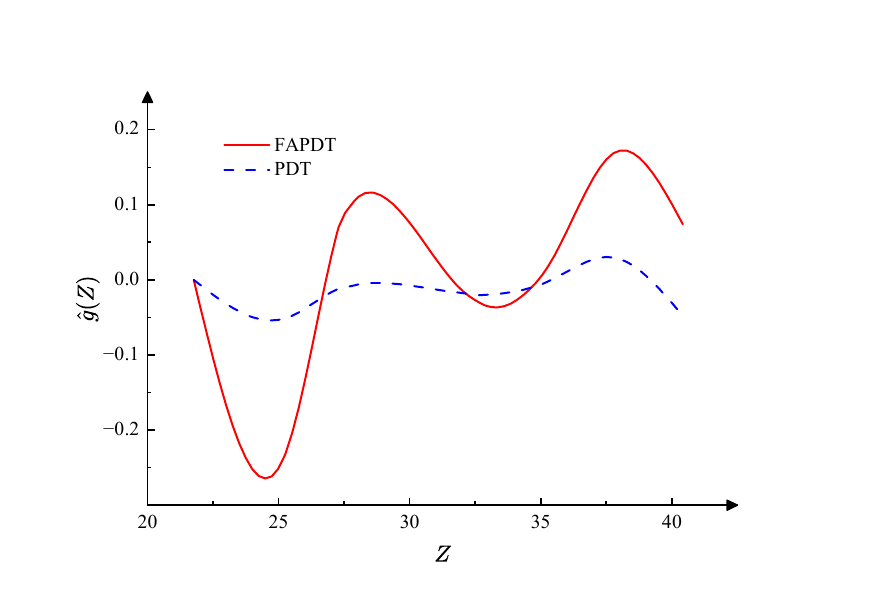}
					\caption{The estimated nonlinear effect $\hat{g}(Z)$ of mother’s age using two testing methods. The  ``FAPDT'' and ``PDT'' signify the results derived from the method in this paper   and  the approach in \cite{zhu2019high}, respectively.} \label{g_hat}
				\end{figure}

Finally, we utilize the proposed FAPDT, along with the PDT  from \cite{zhu2019high} and the FabT  from \cite{fanJ2023}, to assess the significance of the linear components. Critical values are obtained through 2000 bootstrap replications. For FabT, 10-fold cross-validation is employed to determine the tuning parameters, and iterated sure independent screening is used to estimate error variance.
Table \ref{$p$-values} reports the $p$-values for these tests. At a significance level of 0.05, the null hypothesis is rejected for all three methods, indicating that the linear components of the selected gene expression and corresponding clinical variables significantly influence infant birth weight.

\begin{table}[H]
					\small
					
					\renewcommand\arraystretch{1.1}
					\centering \tabcolsep 12pt \LTcapwidth 6in
					\caption{$p$-values  for FAPDT, PDT, and FabT: Results correspond to the method proposed in this paper and the approaches from \cite{zhu2019high} and \cite{fanJ2023}.}
					\label{$p$-values}
					\begin{threeparttable}
						\begin{tabular}{cccc}
							\toprule
							Test       & FAPDT   & PDT  & FabT  \\ \midrule
							$p$-value    & 0.027    & 0.016 & 0.046\\
							\bottomrule
						\end{tabular}
					\end{threeparttable}
				\end{table}

				\section{Conclusions and discussions}\label{Conclusions and discussions section}
Motivated by the challenges presented in the birth weight data, this paper introduces FAPLM to address the partially linear  dependence between the response variable and covariates, as well as the significant correlations among high-dimensional covariates. To achieve effective estimation and inference, we first employ the B-spline method to handle the nonparametric component and use the factor  model to capture the dependency structure among features. 
For parameter estimation, we apply the $\ell_1$-norm penalty and establish both $\ell_1$- and $\ell_2$-norm  error bounds, which agree with the minimax rate of the standard Lasso problems. Additionally, we develop a factor-adjusted projection debiased test to assess the significance of the linear component. To determine the critical value of the proposed test statistic FAPDT, we introduce the Gaussian multiplier bootstrap procedure. The theoretical validity of our methods is established under regularity conditions. Numerical results demonstrate the finite sample performance  of our proposed methodologies.

There are several promising directions for future research. First, our current work assume that the dimension of the nonparametric component is fixed. Exploring methods  to handle PLMs with high-dimensional linear and nonparametric components would be an intriguing avenue.  Second, our framework relies on the sub-Gaussian assumption for the error term. Developing inference techniques for high-dimensional PLMs with heavy-tailed errors is both crucial and of significant interest. We plan to address this challenging problem in the near future.

						\normalem
						\bibliographystyle{apalike}
						\bibliography{bibliography1020}
						
					\end{document}